\newif\ifarxiv 
\definecolor{mylinkcolor}{rgb}{0,0,0.8} 
\newtheorem{lemma}{Lemma}
\newtheorem{proposition}{Proposition} 
\theoremstyle{definition}
\newtheorem{definition}{Definition}
\newcommand{\ket}[1]{| #1 \rangle}
\newcommand{\bra}[1]{\langle #1 |}
\newcommand{\ketbra}[2]{|#1\rangle\!\langle#2|}
\newcommand{\id}{\openone}
\begin{document}

\title{Genuine multipartite entanglement is not necessary for standard device-independent conference key agreement}
\author{Lewis Wooltorton\hspace{0.5mm}\orcidlink{0000-0002-7950-3844}}
    \email{lewis.wooltorton@ens-lyon.fr}
    \affiliation{Inria, ENS de Lyon, LIP, 46 Allee d’Italie, 69364 Lyon Cedex 07, France}    \affiliation{Department of Mathematics, University of York, Heslington, York, YO10 5DD, United Kingdom}
    \affiliation{Quantum Engineering Centre for Doctoral Training, H. H. Wills Physics Laboratory and Department of Electrical \& Electronic Engineering, University of Bristol, Bristol BS8 1FD, United Kingdom}
\author{Peter Brown\hspace{0.5mm}\orcidlink{0000-0001-9593-0136}}
    \email{peter.brown@telecom-paris.fr}
    \affiliation{Télécom Paris, Inria, LTCI, Institut Polytechnique de Paris,
19 Place Marguerite Perey, 91120 Palaiseau, France}
\author{Roger Colbeck\hspace{0.5mm}\orcidlink{0000-0003-3591-0576}}
    \email{roger.colbeck@kcl.ac.uk}
\affiliation{Department of Mathematics, University of York, Heslington, York, YO10 5DD, United Kingdom}
    \affiliation{Department of Mathematics, King's College London, Strand, London, WC2R 2LS, United Kingdom}

\date{27$^{\text{th}}$ November, 2025}

\begin{abstract}
    Conference key agreement aims to establish shared, private randomness among many separated parties in a network. Device-independent conference key agreement (DICKA) is a variant in which the source and the measurement devices used by each party need not be trusted. So far, DICKA protocols largely fall into two categories: those that rely on violating a joint Bell inequality using genuinely multi-partite entangled states, and those that concatenate many bipartite protocols. The question of whether a hybrid protocol exists, where a multi-partite Bell inequality can be violated using only bipartite entanglement, was asked by Grasselli \emph{et al.}\ in [Quantum \textbf{7}, 980, (2023)]. We answer this question affirmatively, by constructing an asymptotically secure DICKA protocol achieving the same rate as the concatenation of bipartite device-independent quantum key distribution, yet relying on a single joint Bell violation. Our results prompt further discussion on the benefits of multi-partite entanglement for DICKA over its bipartite alternative, and we give an overview of different arguments for near-term devices.     
\end{abstract}
\maketitle

\ifarxiv\section{Introduction}\label{sec:intro}\else\noindent{\it Introduction.---}\fi
Information-theoretically secure communication between two separated parties requires a source of shared, private randomness, called a key. Conference key agreement (CKA) is an extension of key distribution to many parties, i.e., it aims to set up a key shared between $N>2$ parties which could then be used to secure communication amongst the $N$ parties. Quantum solutions exist~\cite{cabello00,Chen07,Epping_2017,Grasselli_2018, Grasselli_2019,MurtaReview20,Proietti_21,Carrara23,Pickston_2023} and, as in ordinary quantum key distribution (QKD), they can be made device-independent (DI)~\cite{Ribeiro18,Holz19,Ribeiro19,Holz20}. In other words, by witnessing certain nonlocal correlations~\cite{EPR,Bell_book,BarrettNonlocalResource}, one can show that a device-independent conference key agreement (DICKA) protocol is secure without any device characterization~\cite{BHK,ABGMPS,PABGMS,bcktwo,VV2}. 

A key question is to find the most efficient way to perform DICKA. Each network topology will determine which protocols can be performed, and a popular choice is the star network. Here, a central node distributes part of a multi-partite state to each party every round. One can then consider which states permit DICKA with this setup. Current literature has focused around the use of states with genuine multi-partite entanglement (GME), that is, states that are not biseparable\ifarxiv\else~\fi\footnote{A multi-partite state is biseparable if it can be written as a convex combination of states, where each state in the combination is separable across some bipartition.}, such as the Greenberger-Horne-Zeilinger (GHZ) state. In Refs.~\cite{Ribeiro18,Holz19,Ribeiro19} the parity Clauser-Horne-Shimony-Holt (CHSH) inequality was introduced as a multi-partite Bell inequality for DICKA maximally violated by the GHZ state. This was later generalized to another tailored inequality~\cite{Holz20}, and in Ref.~\cite{Grasselli_2023} analytical bounds on the von Neumann entropy, conditioned on witnessing their violation, were derived using techniques from Refs.~\cite{Ribeiro18,GrasselliMulti21,Woodhead_2018}. In Ref.~\cite{Grasselli_2023}, a comparison is also made to benchmark different DICKA protocols, including ones that are formed by concatenating bipartite device independent quantum key distribution (DIQKD), as described in Ref.~\cite{Epping_2017}. Upper bounds on DICKA rates were also studied in Refs.~\cite{Horodecki22,Philip23}. 

Many of the above protocols have the following structure, which we refer to as ``standard'': the central node distributes part of a multi-partite state to each party, and rounds are divided into test and generation. On test rounds, all parties test a multi-partite Bell expression, and on generation rounds they generate raw key. Reconciliation is performed where one party, say Alice, releases some error correction information that the other parties use to correct their strings. Privacy amplification then follows to ensure that an adversary has negligible information about the final key. The conference key rate is often defined as the difference between the entropy of Alice's raw string conditioned on the possible side information of an adversary and the highest cost of error correction between every other party and Alice, divided by the number of rounds~\cite{Grasselli_2023}.

Shared key can also be established using many independent bipartite protocols. Consider the same network, except that now the source distributes bipartite entanglement between Alice and every other party. Alice then engages in $N-1$ bipartite DIQKD protocols, and by the end holds a distilled key from each. She selects one of these as her final key, and publicly releases the exclusive OR (XOR) of the selected key with every other secret key she holds. Each party can then obtain the selected key using their private key. We refer to this type of protocol as a concatenation of bipartite DIQKD, and note that Alice should in principle use a separate device for each bipartite exchange to avoid opening memory loopholes~\cite{bckone} (or else make some additional assumption about her measurement device).

Concatenating bipartite DIQKD requires bipartite entanglement distribution only. However, it deviates from the standard protocol structure because it is based upon bipartite sub-protocols, whose security is proven independently. This raises the question asked in Ref.~\cite{Grasselli_2023}, of whether a standard realization of DICKA exists in which only biseparable states need to be distributed. An affirmative answer is known for device-dependent CKA~\cite{Carrara21}, where the following protocol structure was considered. Every round, rather than distributing $N-1$ bipartite entangled states, the source probabilistically chooses which party shares entanglement with Alice, and no trusted information about who was chosen is released. Then, the parties perform local measurements each round to generate raw key, and use standard QKD reconciliation. It is not {\it a priori} clear if non-zero conference key rate can be achieved with this protocol, and Ref.~\cite{Carrara21} showed it to be possible in the device-dependent setting. Specifically, Ref.~\cite{Carrara21} provided a family of biseparable states that lead to non-zero conference key in the $N$-BB84 protocol~\cite{Grasselli_2018}. 

Whilst biseparable states can be used for key agreement in the $N$-BB84 protocol, performing this in practice would be wasteful. If each party could learn when they share entanglement with Alice, which can be achieved with authenticated classical communication, the protocol reduces to a concatenation of bipartite QKD. This has better error reconciliation efficiency than that considered in Ref.~\cite{Carrara21} (see \ifarxiv\cref{sec:alt1} \else the Supplemental Material~\cite{supp} \fi for details), resulting in a higher conference key rate.

For the DI scenario, consider the case where each party is restricted to a single device, and a single multi-partite Bell expression is tested to derive security (i.e., the standard protocol structure is adopted). Can a positive conference key rate still be established with biseparable states? 

In this work, we answer the above question affirmatively. Biseparable states can be used for DICKA with a single device per party and a single Bell inequality. Our solution is operationally equivalent to the concatenation of many bipartite DIQKD protocols and hence more straightforward to implement with existing technology than one that requires GME. An additional advantage of our protocol is that it requires only one device per party, preventing particular memory attacks. Moreover, our results suggest the need for further discussion on the realistic benefits of multi-partite entanglement for DICKA over bipartite concatenation. 

\ifarxiv Our work is structured as follows. In \cref{sec:pro_1,sec:pro1} we describe our protocol. In \cref{sec:sec} we review common security definitions, and prove the protocol is secure in the asymptotic noiseless scenario, with details deferred to \cref{app:proof}. \cref{sec:alt} then contains a discussion of possible adaptations and variants of the protocol. Finally, in \cref{sec:disc} we discuss the existing justification for DICKA with GHZ states, and how this compares to protocols based on bipartite entanglement, such as the one presented here. \else\fi           
 
\ifarxiv\section{Standard DICKA protocol without GME}\label{sec:protocol}\else\medskip\noindent{\it Standard DICKA protocol without GME.---}\fi
\ifarxiv\subsection{Protocol overview}\label{sec:pro_1}\else\fi
\begin{figure}[h]
\includegraphics[width=8.4cm]{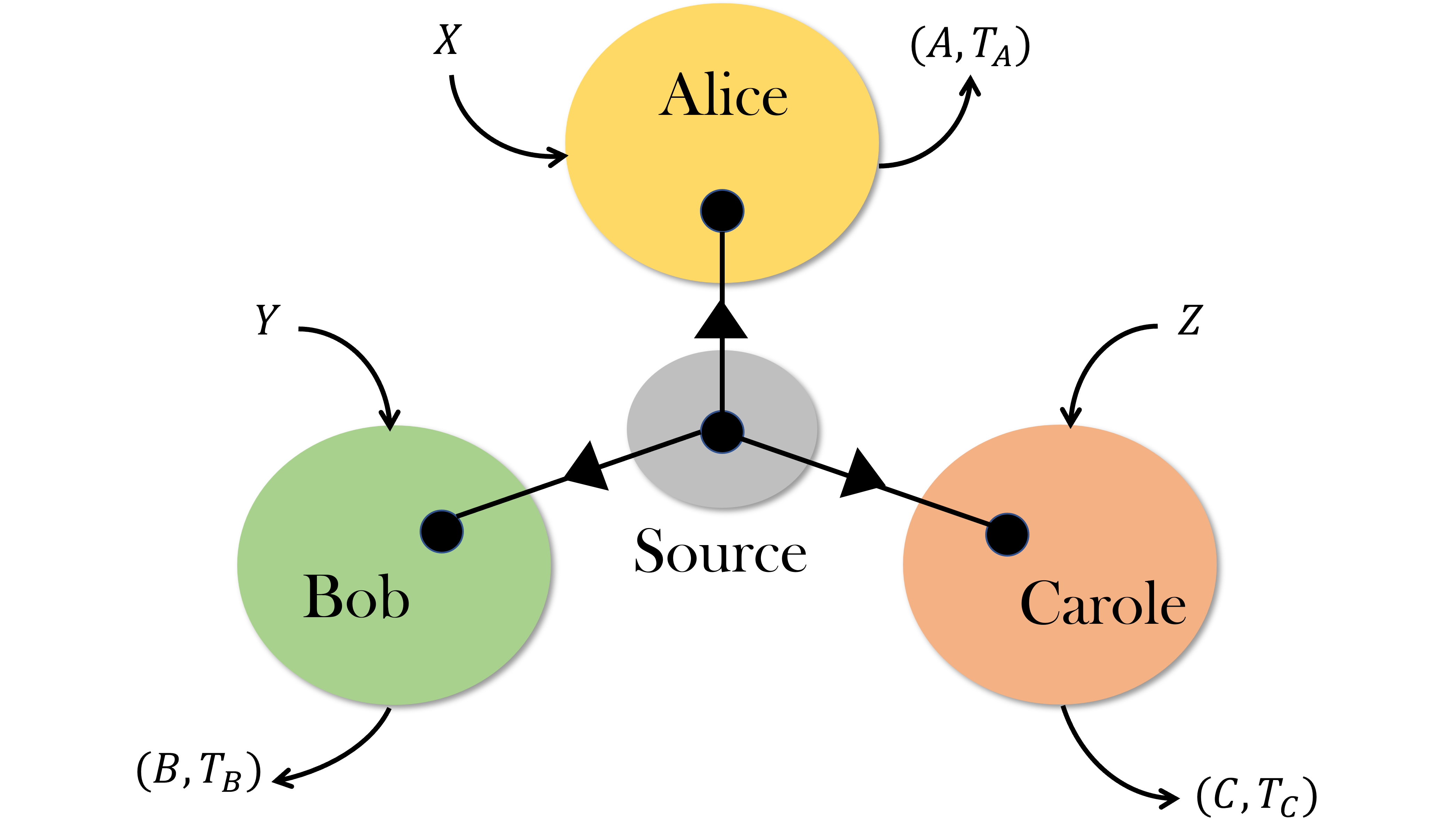}
\centering
\caption{Graphical representation of our DICKA protocol. Each round, every party receives part of a quantum state from an untrusted source, randomly samples an input $X/Y/Z$ and performs an untrusted measurement. They obtain two outcomes, and the joint statistics are used to estimate the value of a multi-partite Bell inequality. In the honest implementation, the outcome $T_{A/B/C}$ corresponds to a classical flag indicating who shares entanglement with Alice on that round, whilst the outcome $A/B/C$ results from performing a CHSH-type measurement on a quantum system.}
\label{fig:protocol}
\end{figure}
Our protocol operates in the following way: each round, a source distributes a multi-partite state, on which isolated measurements are made and a multi-partite Bell inequality is tested. Witnessing its maximum violation guarantees a conference key rate equal to that of $N-1$ concatenated bipartite DIQKD protocols. This can be achieved by a biseparable state that includes a flag register indicating who shares entanglement each round. 

Importantly, security is derived entirely from the maximum violation of the introduced Bell inequality, without trusting the flag. To achieve this, each user performs one side of a CHSH test~\cite{CHSH}. The average CHSH violation of each party with Alice is then measured, conditioned on the flag indicating said party shared entanglement; the sum of all such violations constitutes our Bell value. Any dishonesty in the flag will manifest as a reduction in the CHSH value for one of the pairings, ensuring security. Since every party has access to the flag, which must be reliable in the case of a high Bell violation, they can engage in some classical communication to each distill a secret key with Alice. Efficient reconciliation can then be conducted by Alice announcing the bitwise XOR of a chosen final key with every other secret key. An advantage of this protocol is that it is not vulnerable to the same memory attacks~\cite{bckone} as concatenating $N-1$ DIQKD protocols~\cite{Grasselli_2023}, implying that one conference key can still be established with a single device per party.

\ifarxiv\subsection{Technical description} \label{sec:pro1}\else\medskip\noindent{\it Technical description.---}\fi
For clarity, we focus on three parties, Alice, Bob and Carole, but the generalization to $n$ parties is straightforward. Assume the three parties are arranged in a star configuration, and each have access to one device, as shown in \cref{fig:protocol}. We use random variables $X$ for the input to Alice's device and $(A,T_A)$ for the outputs. These take values $x\in\{0,1\}$ and $(a,t_{A})\in \{0,1\}^{2}$, respectively\ifarxiv\else~\fi\footnote{We use upper case for random variables and lower case for particular instances.}. Bob and Carole's devices have three inputs and four outcomes, described by the random variables $Y,Z$ and $(B,T_{B}),\ (C,T_{C})$, which take values $y,z \in\{0,1,2\}$ and $(b,t_{B}),\ (c,t_{C})\in\{0,1\}^{2}$, respectively. We denote the final conference key $\mathsf{K}_{\mathrm{CKA}}$. Fixing this Bell scenario, we describe the following protocol\ifarxiv\else~\fi\footnote{As is standard in cryptography we use the following assumptions: 1.~Each party works within an isolated laboratory within which they can control information flow to the outside world as well as to individual devices within their laboratory. 2.~Each party has their own private random number generator. 3.~The eavesdropper is limited by the laws of physics and is computationally unbounded. 4.~The parties can communicate classically through authenticated public channels.}:  
\begin{enumerate}
    \item Each round, an unknown tripartite state is distributed among all parties and each party announces receipt of their part of the state.  Alice then randomly assigns the round as ``test'' or ``generation'', and communicates her choice to Bob and Carole.   
    \item On test rounds, each party uses their private random number generator to choose an input $x,y,z \in \{0,1\}$, and on generation rounds Alice sets $X=0$, and Bob and Carole set $Y=Z=2$.
    \item Each party performs their measurement and stores the outcomes in their respective classical registers\ifarxiv\else~\fi\footnote{Each device only learns its own input and not whether it is a test or generation round. See \ifarxiv\cref{app:info} \else the Supplemental Material~\cite{supp} \fi for details.}.  
    \item After a sufficient number of rounds, Alice, Bob and Carole announce the values of their registers $T_{A}$, $T_{B}$ and $T_{C}$ for all rounds. They abort if any of their strings differ or if the strings are constant.
    \item Each party publicly announces all of their inputs and outputs from the test rounds, and they check for a multi-partite Bell inequality violation. If the violation is less than some threshold\ifarxiv\else~\fi\footnote{For the purposes of this proof-of-principle demonstration, this threshold is set to the maximum Bell violation (cf.\ \ifarxiv\cref{prop:ent}\else Proposition~1\fi).}, they abort\ifarxiv\else~\fi\footnote{Additionally, all parties need to release a fraction of their generation data to perform an alignment test with Alice. If the fraction of matched outcomes is below a threshold, they will also abort.}. 
    \item Using the flag data, Alice and Carole communicate to distill a private key\ifarxiv\else~\fi\footnote{In the event of maximum Bell violation, the key distillation process is trivial, since all raw key bits are secure.} $\mathsf{K}_{AC}$. Similarly, Alice and Bob distill a private key $\mathsf{K}_{AB}$. 
    \item Alice sets $\mathsf{K}_{\mathrm{CKA}} = \mathsf{K}_{AB}$, and computes $\mathsf{K}_{\mathrm{XOR}} := \mathsf{K}_{AC} \oplus \mathsf{K}_{AB}$, which she announces. Carole computes $\mathsf{K}_{\mathrm{CKA}} = \mathsf{K}_{\mathrm{XOR}} \oplus \mathsf{K}_{AC}$, meanwhile Bob sets $\mathsf{K}_{\mathrm{CKA}} = \mathsf{K}_{AB}$.   
\end{enumerate}
In this protocol all classical communication takes place using authenticated public channels (i.e., an eavesdropper can overhear, but not modify these messages) and the quantum communication uses insecure quantum channels (i.e., an eavesdropper can do anything allowed by quantum physics to these signals). It is important that during the protocol untrusted devices do not learn anything other than what they need to perform the protocol, see \ifarxiv\cref{app:info} \else the Supplemental Material~\cite{supp} \fi for details.

We now consider an ideal honest implementation using a biseparable state. In this implementation, the state shared between the honest parties is
\begin{multline}
    \rho =  \frac{1}{2}\ketbra{\Phi_{0}}{\Phi_{0}}_{Q_{A}Q_{B}} \otimes \ketbra{+}{+}_{Q_{C}} \otimes \ketbra{000}{000}_{T_{A}T_{B}T_{C}} \\
    + \frac{1}{2}\ketbra{\Phi_{0}}{\Phi_{0}}_{Q_{A}Q_{C}} \otimes \ketbra{+}{+}_{Q_{B}} \otimes \ketbra{111}{111}_{T_{A}T_{B}T_{C}}, \label{eq:stateHon}
\end{multline}
where $Q_{A/B/C}$ are qubit systems, $\ket{\Phi_{0}} = (\ket{00}+\ket{11})/\sqrt{2}$, $\ket{\pm} = (\ket{0}\pm \ket{1})/\sqrt{2}$ and $T_{A}$, $T_{B}$ and $T_{C}$ are binary registers. Each party then performs the following ideal, four outcome measurements. On registers $T_A$, $T_B$ and $T_C$, each party measures in the computational basis (for every input). On registers $Q_A$, $Q_B$ and $Q_C$, the parties perform qubit measurements according to their input, described by the observables 
\begin{equation}
    \begin{aligned}
        A_{0} &= \sigma_{Z},\ A_{1} = \sigma_{X}, \\
        B_{0/1} &= C_{0/1} = (\sigma_{Z} \pm \sigma_{X})/\sqrt{2}, \
        B_{2} = C_{2} = \sigma_{Z},
    \end{aligned}
\end{equation}
where $\sigma_{Z}$ and $\sigma_{X}$ are the Pauli operators. Note that the classical nature of the $T_{A/B/C}$ registers makes it easy not to abort in step 4 ($\ketbra{0}{0}_{T_{A}}$ etc.\ can be macroscopic).

\ifarxiv\subsection{Security proof} \label{sec:sec}\else\medskip\noindent{\it Security proof.---}\fi
In the asymptotic regime, there are typically two quantities of interest for a DICKA protocol. The first is the conditional von Neumann entropy of Alice's key generation measurement, stored in the classical system $A$, conditioned on Eve's quantum side information $E$. This captures the amount of extractable randomness per round in Alice's raw string. The second captures the worst case cost of error reconciliation between Alice and every other party. That is, it captures the maximum information Alice needs to release so that every party can correct their raw key to Alice's. The conference key rate is the difference between these two terms, and captures the amount of distillable key, per round, in the asymptotic limit. If one entangled state is consumed in each round of the protocol, as in the case of GHZ based protocols~\cite{Ribeiro18,Holz19,Ribeiro19}, this can be reinterpreted as the rate per entangled state. The suitability of this definition is debatable, which we elaborate on in \ifarxiv\cref{sec:alt_a,sec:disc}\else the Discussion and Supplemental Material~\cite{supp}\fi.

For our protocol, we define the bipartite rate between Alice and Bob, conditioned on a particular value of the flag $T$\ifarxiv\else~\fi\footnote{Recall, given that the protocol did not abort, all parties agree on their classical variable $T_{A/B/C} \equiv T$.}, by  
\begin{multline}
    r^{\infty}_{T=0} = \inf H(A|X=0,T=0,E) \\ - H(A|B,X=0,Y=2,T=0),
\end{multline}
where $H(A|E)_{\rho} = H(\rho_{AE}) - H(\rho_{E})$, $H(\rho) = -\mathrm{Tr}[\rho \log(\rho)]$ and the infimum is taken over all quantum states and measurements compatible with some observations, such as the violation of a Bell inequality. A similar quantity, $r^{\infty}_{T=1}$, can be defined between Alice and Carole. Following Alice's XOR reconciliation procedure, the final conference key rate is given by
\begin{equation}
    r^{\infty}_{\mathrm{CKA}} = \min \{ p_{T}(0) \,  r^{\infty}_{T=0},p_{T}(1)\, r^{\infty}_{T=1}\},
\end{equation}
where $p_{T}(t)$ is the probability the parties record $T=t$. Since in each round a single multi-partite state is distributed, $r^{\infty}_{\mathrm{CKA}}$ captures the asymptotic rate per state consumed by the protocol. Proving security then becomes the task of bounding $\inf H(A|X=0,T=t,E)$ for $t \in \{0,1\}$.

For the security proof, we assume each party holds an arbitrary quantum system $\tilde{Q}_{A/B/C}$. We denote Alice's POVM elements $\tilde{M}_{(a,t)|x}$, Bob's $\tilde{N}_{(b,t)|y}$ and Carole's $\tilde{O}_{(c,t)|z}$. We denote the underlying state, including Eve's purification, by $\ket{\Psi}_{\tilde{Q}_{A}\tilde{Q}_{B}\tilde{Q}_{C}E}$. The outcome $t$ can have no dependence on the parties' input choices, otherwise the no-signalling assumption would be violated. Therefore, we can consider the parties observing a joint distribution (suppressing the tensor product)
\begin{equation}
    p(abct|xyz) = \bra{\Psi}\tilde{M}_{(a,t)|x} \tilde{N}_{(b,t)|y} \tilde{O}_{(c,t)|z} \ket{\Psi}, \label{eq:qProb}
\end{equation}
from which the Bell value  $\langle I \rangle := \langle I_{\mathrm{CHSH}}^{AB,T=0} \rangle + \langle I_{\mathrm{CHSH}}^{AC,T=1} \rangle$ can be calculated, where $I_{\mathrm{CHSH}}^{AB,T=t} := \tilde{A}_{0,t}(\tilde{B}_{0,t} + \tilde{B}_{1,t}) + \tilde{A}_{1,t}(\tilde{B}_{0,t}-\tilde{B}_{1,t})$, and similarly for $I_{\mathrm{CHSH}}^{AC,T=t}$. In the above, $\tilde{A}_{x,t} := \tilde{M}_{(0,t)|x} - \tilde{M}_{(1,t)|x}$ (note that $\tilde{A}_{x,t}$ is not an observable) and similarly for $\tilde{B}_{y,t},\tilde{C}_{z,t}$, and for any operator $O$, $\langle O \rangle = \bra{\Psi} O \ket{\Psi}$. The local bound of $\langle I \rangle$ is at most $2$, whereas the quantum bound of $2\sqrt{2}$ is achieved by the honest strategy. 

The following proposition underpins the security of our construction at maximum violation, i.e., when $\langle I \rangle = 2\sqrt{2}$, and under the assumption of asymptotically many independent and identically distributed rounds (this can be used as a basis for finite rates using tools such as the entropy accumulation theorem~\cite{DFR,ADFRV}).

\ifarxiv\begin{proposition}[informal] \label{prop:ent}\else\medskip\emph{Proposition~1.---}\fi
    Let the Bell expression $I$ be defined above. Then we have the following for $t \in \{0,1\}$:
    \begin{equation}
        \inf \, H(A|X=0,T=t,E) = 1, 
    \end{equation} 
    where the infimum is taken over all quantum states and measurements that achieve $\langle I \rangle = 2\sqrt{2}$ and for which it is impossible to abort in Step~4, and the von Neumann entropy is evaluated on the post-measurement state
    \begin{multline}
        \rho_{AE|X=0,T=t} = \frac{1}{p_{T|X=0}(t)}\sum_{a \in \{0,1\}} \ketbra{a}{a}_{A} \\ 
        \otimes  \mathrm{Tr}_{\tilde{Q}_{A}\tilde{Q}_{B}\tilde{Q}_{C}}\Big[ (\tilde{M}_{(a,t)|0}\otimes \id_{\tilde{Q}_{B}\tilde{Q}_{C}E})\ketbra{\Psi}{\Psi}\Big], \label{eq:pms} 
    \end{multline}
    where $p_{T|X=0}(t) := \sum_{a\in \{0,1\}} \bra{\Psi}(\tilde{M}_{(a,t)|0}\otimes \id_{\tilde{Q}_{B}\tilde{Q}_{C}E})\ket{\Psi}$.
\ifarxiv\end{proposition}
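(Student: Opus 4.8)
The plan is to reduce the claim to the rigidity (self-testing) property of the CHSH game. I would argue in two stages: first, that the maximal value $\langle I\rangle=2\sqrt2$ can only be reached when, conditioned on each flag value, the relevant pair maximally violates its own CHSH expression; second, that CHSH rigidity then forces Alice's key-generating outcome to be uniform and decoupled from Eve, so that $H(A|X=0,T=t,E)=1$.

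For the first stage I would exploit the two structural facts already in place. Since the protocol does not abort in Step~4, the flag outcomes are perfectly correlated, $T_A=T_B=T_C\equiv T$, and by no-signalling the distribution of $T$ is input-independent. Writing $p(ab,t|xy)=p_T(t)\,p(ab|xy,t)$, the two terms of $\langle I\rangle$ become ordinary CHSH values on the normalised conditional states,
\begin{align}
\langle I\rangle = p_T(0)\,\cS_{AB|t=0} + p_T(1)\,\cS_{AC|t=1},
\end{align}
where $\cS_{AB|t=0}$ and $\cS_{AC|t=1}$ denote the conditional CHSH values. Tsirelson's bound gives $\cS_{AB|t=0},\cS_{AC|t=1}\le 2\sqrt2$, and as $p_T(0)+p_T(1)=1$ this reproduces the quantum bound $\langle I\rangle\le2\sqrt2$. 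Equality is thus possible only when $\cS_{AB|t=0}=\cS_{AC|t=1}=2\sqrt2$, i.e.\ each pairing saturates CHSH within its own flag sector.

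For the second stage I would apply CHSH self-testing to the conditional state in the sector $t=0$ (the case $t=1$ being symmetric). Maximal violation certifies, up to a local isometry, that Alice and Bob share a pure Bell pair $\ket{\Phi_0}$ in tensor product with the remaining systems, with Alice's $X=0$ measurement acting as $\sigma_Z$ on her half. Two consequences finish the proof. Alice's half of the Bell pair is maximally mixed, so her outcome $A$ is uniform; and because the certified pair is pure, Eve---holding the purification of $\ket\Psi$---can only purify the residual junk register and is therefore in product with the pair. The post-measurement state of \cref{eq:pms} then factorises as $\rho_{AE|X=0,T=0}=\tfrac12\id_A\ot\rho_E$, whence $H(A|X=0,T=0,E)=H(A)=1$. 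The matching upper bound is automatic because $A$ is a single bit, so equality follows.

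I expect the main obstacle to be making the second stage rigorous in the conditioned picture: the flag is itself a device outcome rather than a projector onto a fixed factor, so I must justify that conditioning on $T=t$ produces a genuine bipartite state to which CHSH self-testing applies, and that Eve's purification respects this branch decomposition. Here the perfect correlation of the flags (no-abort) together with their input-independence (no-signalling) is exactly what lets me treat $T$ as a classical register splitting $\ket\Psi$ into coherent branches, within each of which the standard CHSH argument---maximal violation decouples the measured qubit from any purifying adversary---applies verbatim.
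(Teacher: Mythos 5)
Your skeleton matches the paper's proof at a high level: maximal $\langle I\rangle$ forces maximal CHSH violation in each flag sector (the paper's \cref{eq:CHSHconst}), and a decoupling argument then gives $H(A|X=0,T=t,E)=1$. Your stage 2 is, however, a genuinely different mechanism from the paper's: you invoke CHSH self-testing (local isometry, extracted Bell pair, junk decoupled from Eve), whereas the paper never constructs an isometry — it uses the fact that the maximally violating behaviour is nonlocal and \emph{extremal} in $\mathcal{Q}$, and then applies the Franz--Furrer--Werner ``secure behaviour'' theorem (\cref{lem:sec,lem:dec_new}) to conclude directly that $\rho_{AE|X=0,T=t}=\tfrac{1}{2}\id_{A}\otimes\sigma_{E|T=t}$. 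Both mechanisms are legitimate once one is entitled to apply them.

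The genuine gap is the step you yourself flag as the ``main obstacle,'' and your proposed resolution is circular: both of your stages presuppose that the flag-conditioned data $p_{AB|XY,T=0}$ and $p_{AC|XZ,T=1}$ are \emph{quantum} behaviours, and this is precisely what must be proven. Tsirelson's bound (your stage-1 inequality $\cS_{AB|t=0}\le 2\sqrt2$) and CHSH self-testing apply only to behaviours realized by bona fide quantum models; conditioning on a device outcome does not in general preserve quantumness (post-selection can produce signalling or super-quantum conditional statistics — the detection-loophole phenomenon). Concretely, here $\sum_{a}\tilde{M}_{(a,t)|x}=\tilde{\Pi}^{A}_{t|x}\neq\id$, so the conditioned measurement families are not complete, and $\tilde{A}_{x,t}$ squares to a projector rather than to $\id$, so the textbook CHSH/SOS argument does not apply verbatim. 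Saying that no-abort plus no-signalling ``lets me treat $T$ as a classical register splitting $\ket{\Psi}$ into coherent branches'' asserts the needed conclusion rather than establishing it. This is where the paper's proof does essentially all of its technical work: \cref{lem:proj} upgrades the no-abort constraints \cref{eq:projConstr1} to the input-independent operator identities $\tilde{\Pi}^{A}_{t|x}\ket{\Psi}=\tilde{\Pi}^{B}_{t|y}\ket{\Psi}=\tilde{\Pi}^{C}_{t|z}\ket{\Psi}$; \cref{lem:subspace} then shows that the conditioned state and restricted measurements form a finite-dimensional \emph{commuting operator} model on a suitable subspace (which has no obvious tensor factorization); and the equivalence of finite-dimensional commuting-operator and tensor-product models is invoked to conclude the conditional behaviours are quantum, hence extremal at maximal violation. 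Note also that the paper's analogue of your stage 1 sidesteps conditional quantumness entirely by proving an operator-level SOS bound (\cref{eq:sos1,eq:tbound2}) that needs only \cref{lem:proj}; conditional quantumness is needed only for the decoupling stage. To complete your proposal you must supply this lemma-level work, or an equivalent substitute (e.g., first prove the \cref{lem:proj} identities, deduce that $\ket{\Psi_t}$ is invariant under every projector $\tilde{\Pi}^{A}_{t|x},\tilde{\Pi}^{B}_{t|y},\tilde{\Pi}^{C}_{t|z}$, and then complete the sub-normalized measurements by adjoining $\id-\tilde{\Pi}$ to one outcome so as to obtain a genuine tensor-product model reproducing the conditional behaviour).
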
\else\medskip\fi

\noindent The precise proposition and proof can be found in \ifarxiv\cref{app:proof}\else the Supplemental Material~\cite{supp}\fi. Note that in the honest strategy, Alice and Bob (Carole) will observe perfect correlations when $X=0$, $Y=2$ and $T=0$ ($X=0$, $Z=2$ and $T=1$), hence we find $r^{\infty}_{T=0} = r^{\infty}_{T=1} = 1$. The honest strategy also has $p_{T}(0) = p_{T}(1) = 1/2$ and as a result we find $r^{\infty}_{\mathrm{CKA}} = 1/2$. We also show our construction is robust to noise in \ifarxiv \cref{sec:rob}. \else the Supplemental Material~\cite{supp}. \fi

Our protocol is similar in structure to the device-dependent result of Ref.~\cite{Carrara21}. Moreover, it leaves open the possibility for modification. For example, one could equivalently use an honest implementation without source randomness, or substitute the CHSH inequality for another Bell inequality with different properties. We elaborate on these points in \ifarxiv \cref{sec:alt}. \else the Supplemental Material~\cite{supp}. \fi

\ifarxiv\section{Discussion}\label{sec:disc}\else\medskip\noindent{\it Discussion.---}\fi
In this work we presented a secure DICKA protocol that can be performed (even optimally) without GME states, resolving the open question posed in Ref.~\cite{Grasselli_2023}. Our protocol is operationally equivalent to concatenating bipartite DIQKD protocols based on CHSH tests. This prompts another question: how do protocols with bipartite entanglement compare to those with GME? Frequently, the figure of merit chosen for this comparison is the key rate per network resource, or rather, per entangled state~\cite{MurtaReview20, Carrara21,Grasselli_2023}. As acknowledged by the authors of Ref.~\cite{Grasselli_2023}, this can be misleading, since it implies that every entangled state has the same experimental cost. In many experimental setups, it is easier to generate bipartite entanglement than entanglement shared by three or more systems, and hence it would not make sense to define the same cost to both (as done in Ref.~\cite{Grasselli_2023}). Theoretically, a more suitable measure could consist of an entanglement monotone~\cite{Plenio07}, such as the amount of distillable entanglement~\cite{Arnon_Friedman_2019}, which is likely complicated in the multi-partite scenario~\cite{Philip25}. In practice, the best approach will be architecture dependent, with factors such as cost, complexity and performance playing a role.      

For example, Ref.~\cite{Epping_2017} discussed different network topologies in which (device-dependent) protocols using GHZ states are argued to have an advantage over the bipartite case below certain noise thresholds. Here, the key rate is measured per unit time. This depends on the network structure, something that is missed when counting per entangled state~\cite{Grasselli_2023}. In Ref.~\cite{Pickston_2023}, a different network topology was used that allows multiple Bell pairs to be distilled in a single network use, referred to as multi-cast. Under these conditions, the experiment in Ref.~\cite{Pickston_2023} maintains a key rate improvement from the use of GHZ states. Whether such improvements can be observed in DICKA is an interesting question, since the stringent noise requirements of DI protocols may be limiting.           

Additional arguments against concatenating bipartite DIQKD protocols include~\cite{Epping_2017} $(i)$ the additional qubit consumption of $N-1$ Einstein-Podolsky-Rosen pairs versus one GHZ state, $(ii)$ the additional classical communication cost to perform efficient XOR based reconciliation, and $(iii)$, the issue of memory effects~\cite{Grasselli_2023}, requiring Alice to use a separate device to establish a key with each party. Points $(i)$ and $(ii)$ depend on the capabilities of an experimental set up. For example, it may take multiple attempts to successfully distribute a GHZ state, and a greater classical overhead may be acceptable in return for more noise-resilient quantum resources. Moreover, a significant bottleneck for QKD implementations is the computational efficiency of privacy amplification~\cite{Hayashi_2016}. Efficient algorithms impose requirements on the min-entropy rate (the min-entropy accumulated as a fraction of the total block length) and the seed length. Whether bipartite protocols or multi-partite protocols are more beneficial in this regard is a further open question.

For point $(iii)$, we have introduced a DICKA protocol based on bipartite entanglement that inherently requires one device per party. In essence, the conditional entropy of Alice's key when entanglement is shared with Bob is certified by the same Bell test as when shared with Carole, so Alice can use the same device. If the device was to leak information about an Alice-Bob key bit in the announced data for the Alice-Carole key, this would decrease the Bell violation, as Alice and Carole would no longer maximally violate CHSH. We also acknowledge that concatenating bipartite DIQKD in the standard way with a single device per party can still be made secure against memory attacks. Though penalizing as the number of parties increase, provided \emph{all} information leaked over the classical channel is accounted for in the min-entropy estimate of each bipartite key, the final conference key remains secure.      

Finally, we emphasize that this discussion is relevant to DI protocols that use GHZ states. This excludes, for example, protocols based on interfering single photons or weak coherent pulses~\cite{Grasselli_2019,Cao21,Cao_2021b, Carrara23}. Similar remarks could also be made for DI randomness generation protocols based on GHZ states~\cite{WBC2,GrasselliMulti21,Woodhead_2018,Grasselli_2023}, versus, e.g., parallel bipartite protocols. It would be interesting to see if the protocol presented here could be adapted to this task, allowing randomness generation from $N$ devices using bipartite entanglement. Unlike key distribution however, randomness expansion experiments do not require the physical separation of devices. Rather, the no-signalling assumption can be satisfied by shielding the devices from each other, whilst remaining in the same lab. This reduces the effect of transmission loss when distributing entanglement, boosting the noise tolerance. This could make protocols that rely on GHZ states more realistic. 

\ifarxiv\acknowledgements\else\medskip\noindent{\it Acknowledgements.---}\fi The authors thank Gláucia Murta for useful discussions about CKA protocols and feedback on a preliminary version of the manuscript. This work was supported by EPSRC via the Quantum Communications Hub (Grant No.\ EP/T001011/1), the Integrated Quantum Networks Hub (Grant No.\ EP/Z533208/1), Grant No.\ EP/SO23607/1, and the European Union's Horizon Europe research and innovation programme under the project ``Quantum Secure Networks Partnership'' (QSNP, grant agreement No.\ 101114043).

%

\onecolumngrid
\ifarxiv\appendix\else\renewcommand{\thesection}{S\arabic{section}}\setcounter{equation}{0}\renewcommand{\theequation}{S\arabic{equation}}   \newpage\begin{center}\large\textbf{Supplemental Material for Bipartite entanglement is sufficient for standard device-independent conference key agreement}\end{center}\fi

\section{Discussion of information flow in the protocol}
\label{app:info}
The security of our DICKA protocol is derived from the violation of a Bell inequality. Bell inequalities were originally introduced as a way to falsify the description of quantum theory in terms of local hidden variables (LHV), and such experiments are known as Bell tests~\cite{EPR,Bell_book,CHSH}. Conclusively showing that no LHV model can explain the correlations observed in a Bell test requires that certain loopholes are closed in the experimental implementation. A loophole is any means by which an LHV model could fake a Bell violation. A key example is the locality loophole~\cite{Aspect82,Hensen2015}. If one party's measurement choice can influence the measurement outcome of the other, violating a Bell inequality becomes trivial. This loophole is typically closed by space-like separating the measurements of each party, and relying on the impossibility of superluminal signalling as implied by relativity theory (if the measurement settings of one party were to influence the measurement outcome of the other, the communication would have to take place faster than the speed of light). Thus, the no-signalling condition is a consequence of the underlying causal structure. When space-like separation is not satisfied, the locality loophole is said to be open. 

However, in the scenario we consider, space-like separation is not enough to guarantee security, and a stronger condition is required. For any cryptographic task to be secure, it is necessary that secret information generated inside the laboratory of the users during the protocol (such as raw key) is not leaked (even at below the speed of light). This cannot be guaranteed by space-like separation. Instead, the assumption that the lab is secure is used to prevent private information leaking to the outside world. In the DICKA scenario, unwanted communication of each party's measurement setting is covered by this assumption, and space-like separation is no longer necessary for security.

Another important consideration for our protocol, and any other DIQKD/DICKA protocol based on spot checking, is the communication of whether each round is ``test'' or ``generation'' by Alice to every other party (step 1). It is crucial that the state creation device and the measurement devices of each party do not know this information. Otherwise, the devices could behave honestly on test rounds and follow a pre-programmed classical behaviour on generation rounds. Such a strategy could ensure that the protocol (almost always) does not abort, yet the raw key is completely insecure. To prevent this, each party should store their share of the distributed state locally before Alice randomly assigns the round as ``test'' or ``generation'' and communicates this to the other parties. Each party should then sample their input according to the appropriate distribution and measure \emph{without} allowing the information about whether the round is a ``test'' or ``generation'' round to propagate to their measurement device.

For many experimental platforms, local storage is challenging with current technology (although this was achieved in a recent experimental demonstration of DIQKD~\cite{Nadlinger_2022}). An alternative is to modify the protocol to one in which each party uses a biased local random number generator, where a larger weight is placed on choosing measurement settings associated with generation rounds. This was discussed in~\cite{bhavsar2023improved} in the context of DI quantum randomness expansion.

A second alternative would be to use a secure channel to communicate ``test'' or ``generate'' prior to the state being shared. This could be done using pre-shared conference key and hence the protocol would become conference key expansion and the amount of pre-shared conference key would need to be accounted for in the rate. In the asymptotic case this is negligible because the probability of a test round can be made arbitrarily small. The channel could also be secured using a key generated with a classical protocol believed to be secure against a quantum attack. The final conference key would then have everlasting security, provided the classical protocol was not broken in the time between sending the ``test''/``generate'' information and measuring the quantum state, which can be kept very short.

\section{Proof of \ifarxiv\cref{prop:ent}\else Proposition~1\fi} \label{app:proof}

\subsection{Notation and definitions}
We denote the physical Hilbert space of all four parties $\mathcal{H} = \mathcal{H}_{\tilde{Q}_{A}}\otimes \mathcal{H}_{\tilde{Q}_{B}} \otimes \mathcal{H}_{\tilde{Q}_{C}} \otimes \mathcal{H}_{E}$. For any operators $M,N,O$ on $\mathcal{H}_{\tilde{Q}_{A}},\mathcal{H}_{\tilde{Q}_{B}}, \mathcal{H}_{\tilde{Q}_{C}}$ respectively, we implicitly include a tensor product with identity on all other subsystems. That is, for a state $\ket{\psi} \in \mathcal{H}$, $MNO\ket{\psi}$ is understood as $M \otimes N \otimes O \otimes \id_{E}\ket{\psi}$. We remove the tilde and write $Q_{A},Q_{B},Q_{C}$ for some target quantum system, e.g., a qubit. 

A tuple $(\mathcal{H}_{\tilde{Q}_{A}} , \mathcal{H}_{\tilde{Q}_{B}} ,\mathcal{H}_{E}, \ket{\Psi},\{\{\tilde{P}_{a|x}\}_{a}\}_{x},\{\{\tilde{Q}_{b|y}\}_{b}\}_{y})$ is a tensor product quantum model if $\ket{\Psi} \in \mathcal{H}_{\tilde{Q}_{A}} \otimes \mathcal{H}_{\tilde{Q}_{B}} \otimes \mathcal{H}_{E}$, $\tilde{P}_{a|x} \in \mathrm{P}(\mathcal{H}_{\tilde{Q}_{A}})$ and $\tilde{Q}_{b|y} \in \mathrm{P}(\mathcal{H}_{\tilde{Q}_{B}})$, where $\mathrm{P}(\mathcal{H})$ is the set of positive linear operators on a Hilbert space $\mathcal{H}$, and $\sum_{a}\tilde{P}_{a|x} = \id_{\tilde{Q}_{A}}, \ \sum_{b}\tilde{Q}_{b|y} = \id_{\tilde{Q}_{B}}$. The tuple $(\mathcal{H}_{\tilde{Q}_{A}} , \mathcal{H}_{\tilde{Q}_{B}} ,\mathcal{H}_{E}, \ket{\Psi},\{\{\tilde{P}_{a|x}\}_{a}\}_{x},\{\{\tilde{Q}_{b|y}\}_{b}\}_{y})$ is a quantum model for a behaviour $p_{AB|XY}(ab|xy)$ if it additionally satisfies $\langle\tilde{P}_{a|x}\tilde{Q}_{b|y}\rangle = p_{AB|XY}(ab|xy)$, where $\langle\tilde{P}_{a|x}\tilde{Q}_{b|y}\rangle$ is shorthand for $\bra{\Psi}\tilde{P}_{a|x}\tilde{Q}_{b|y}\ket{\Psi}$. When each party has binary inputs and binary outputs, we denote the set of such (bipartite) behaviours $\mathcal{Q}$. The definition generalizes straightforwardly to more parties. Another class of quantum models are those consisting of a single Hilbert space for the honest parties $\mathcal{H}_{\tilde{Q}}$, along with commuting operators $\tilde{P}_{a|x},\tilde{Q}_{b|y} \in \mathrm{P}(\mathcal{H}_{\tilde{Q}})$, rather than assuming a tensor product structure. Note we still assume a tensor product structure with Eve, i.e., $\ket{\Psi} \in \mathcal{H}_{\tilde{Q}} \otimes \mathcal{H}_{E}$. We call such models commuting operator models, and they include tensor product models as a special case.    

\subsection{Proof}
As discussed in the main text, we prove \ifarxiv\cref{prop:ent}\else Proposition~1 \fi under the assumption of asymptotically many independent and identically distributed rounds. Without loss of generality, we also assume that all measurements are projective according to Naimark’s dilation theorem~\cite{paulsen_2003}. Before giving the proof, we define the projector onto the outcome $t$ for each party, conditioned on their input:
    \begin{equation}
        \tilde{\Pi}_{t|x}^{A} := \tilde{M}_{(0,t)|x} + \tilde{M}_{(1,t)|x}. \label{eq:proj}
    \end{equation}
We can similarly define $\tilde{\Pi}_{t|y}^{B},\tilde{\Pi}_{t|z}^{C}$, and since $\sum_{a,t}\tilde{M}_{(a,t)|x} = \id_{\tilde{Q}_{A}}$, $\sum_{b,t}\tilde{N}_{(b,t)|y} = \id_{\tilde{Q}_{B}}$ etc., we can see
    \begin{equation}
        \sum_{t}\tilde{\Pi}_{t|x}^{A} = \id_{\tilde{Q}_{A}}, \ \ \   \sum_{t}\tilde{\Pi}_{t|y}^{B} = \id_{\tilde{Q}_{B}}, \ \text{and} \  \sum_{t}\tilde{\Pi}_{t|z}^{C} = \id_{\tilde{Q}_{C}}.   \label{eq:proj1}
    \end{equation} 
    We then define the set of quantum models $(\mathcal{H}_{\tilde{Q}_{A}} , \mathcal{H}_{\tilde{Q}_{B}} ,\mathcal{H}_{\tilde{Q}_{C}}, \mathcal{H}_{E}, \ket{\Psi},\{\{\tilde{M}_{(a,t)|x}\}_{a,t}\}_{x},\{\{\tilde{N}_{(b,t)|y}\}_{b,t}\}_{y},\{\{\tilde{O}_{(c,t)|z}\}_{c,t}\}_{z})$ satisfying
    \begin{equation}
        0<\langle \tilde{\Pi}^{A}_{t|x} \rangle = \langle \tilde{\Pi}^{B}_{t|y} \rangle = \langle \tilde{\Pi}^{C}_{t|z} \rangle = \langle \tilde{\Pi}^{A}_{t|x}\tilde{\Pi}^{B}_{t|y} \rangle = \langle \tilde{\Pi}^{A}_{t|x}\tilde{\Pi}^{C}_{t|z} \rangle = \langle \tilde{\Pi}^{B}_{t|y}\tilde{\Pi}^{C}_{t|z} \rangle = \langle \tilde{\Pi}^{A}_{t|x}\tilde{\Pi}^{B}_{t|y}\tilde{\Pi}^{C}_{t|z} \rangle =: p_{T}(t) \label{eq:projConstr1}
    \end{equation}
    and $\langle I \rangle = 2\sqrt{2}$, where $I = I_{\mathrm{CHSH}}^{AB,T=0} + I_{\mathrm{CHSH}}^{AC,T=1}$ is the Bell expression defined in the main text, as $\mathcal{F}$. Note that all quantum models that satisfy \cref{eq:projConstr1} cannot cause the protocol to abort on step 4 in the main text. That is, all parties must obtain the same value of the flag each round (i.e., $T_{A} = T_{B} = T_{C}$), and $p_{T}(0),p_{T}(1) > 0$. This condition is achievable in practice, since in the honest implementation, the flag corresponds a classical signal which can be sent robustly down the channel.  
    
    Below, we give a lemma that states some useful properties of quantum models in $\mathcal{F}$.
    \begin{lemma}
        Let $(\mathcal{H}_{\tilde{Q}_{A}} , \mathcal{H}_{\tilde{Q}_{B}},\mathcal{H}_{\tilde{Q}_{C}} ,\mathcal{H}_{E},\ket{\Psi},\{\{\tilde{M}_{(a,t)|x}\}_{a,t}\}_{x},\{\{\tilde{N}_{(b,t)|y}\}_{b,t}\}_{y},\{\{\tilde{O}_{(c,t)|z}\}_{c,t}\}_{z})$ be any quantum model in $\mathcal{F}$ and $\tilde{\Pi}_{t|x}^{A},\tilde{\Pi}_{t|y}^{B},\tilde{\Pi}_{t|z}^{C}$ be as defined in \cref{eq:proj}. Then the following property must be satisfied for all $x,y,z,t\in \{0,1\}$$\mathrm{:}$
        \begin{equation}
            \begin{aligned}
                \tilde{\Pi}^{A}_{t|x}  \ket{\Psi} = \tilde{\Pi}^{B}_{t|y}  \ket{\Psi} = \tilde{\Pi}^{C}_{t|z}  \ket{\Psi}.
            \end{aligned} \label{eq:projprop}
        \end{equation}
       \label{lem:proj}
    \end{lemma}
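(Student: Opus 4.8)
The plan is to establish each of the two independent equalities in \cref{eq:projprop} by showing that the corresponding difference of projectors annihilates $\ket{\Psi}$, after which the full chain $\tilde{\Pi}^{A}_{t|x}\ket{\Psi} = \tilde{\Pi}^{B}_{t|y}\ket{\Psi} = \tilde{\Pi}^{C}_{t|z}\ket{\Psi}$ follows by combining them. The workhorse is the elementary identity that for two commuting orthogonal projectors $P,Q$ one has $\|(P-Q)\ket{\Psi}\|^{2} = \langle P \rangle + \langle Q \rangle - 2\langle PQ \rangle$, so that matched single-party and joint statistics force $P\ket{\Psi}=Q\ket{\Psi}$.

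First I would record two bookkeeping facts. By Naimark we may take the $\tilde{M}_{(a,t)|x}$ to be projective, so for fixed $x$ they are mutually orthogonal and sum to the identity; hence the partial sum $\tilde{\Pi}^{A}_{t|x} = \tilde{M}_{(0,t)|x} + \tilde{M}_{(1,t)|x}$ is itself an orthogonal projector, and likewise for $\tilde{\Pi}^{B}_{t|y},\tilde{\Pi}^{C}_{t|z}$. Moreover, since the tuple defining $\mathcal{F}$ has a tensor-product structure across $\tilde{Q}_{A},\tilde{Q}_{B},\tilde{Q}_{C}$, any two of these projectors commute (with identities implicitly tensored onto the remaining factors).

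Next I would apply the norm identity to the pair $(\tilde{\Pi}^{A}_{t|x},\tilde{\Pi}^{B}_{t|y})$. Expanding $(\tilde{\Pi}^{A}_{t|x}-\tilde{\Pi}^{B}_{t|y})^{2}$ using Hermiticity, idempotence and commutativity gives $\tilde{\Pi}^{A}_{t|x}+\tilde{\Pi}^{B}_{t|y}-2\tilde{\Pi}^{A}_{t|x}\tilde{\Pi}^{B}_{t|y}$, so taking the expectation in $\ket{\Psi}$ and substituting \cref{eq:projConstr1} yields $p_{T}(t)+p_{T}(t)-2p_{T}(t)=0$. Hence $(\tilde{\Pi}^{A}_{t|x}-\tilde{\Pi}^{B}_{t|y})\ket{\Psi}=0$, i.e.\ $\tilde{\Pi}^{A}_{t|x}\ket{\Psi}=\tilde{\Pi}^{B}_{t|y}\ket{\Psi}$. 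Running the identical argument on $(\tilde{\Pi}^{A}_{t|x},\tilde{\Pi}^{C}_{t|z})$ gives $\tilde{\Pi}^{A}_{t|x}\ket{\Psi}=\tilde{\Pi}^{C}_{t|z}\ket{\Psi}$, and the two together deliver \cref{eq:projprop} for every $x,y,z,t\in\{0,1\}$.

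I do not anticipate a genuine obstacle here: the content is carried entirely by the matched marginal and joint flag statistics in \cref{eq:projConstr1}, and the maximal-violation hypothesis $\langle I \rangle = 2\sqrt{2}$ plays no role in this lemma (it enters only later, in the entropy bound). The only points requiring care are the two structural facts above, that the $\tilde{\Pi}$'s are projectors and that they pairwise commute, since the cancellation in the norm computation relies on both; if one wanted to cover commuting-operator models as well, the same proof applies verbatim, with commutativity supplied by hypothesis rather than by the tensor-product split.
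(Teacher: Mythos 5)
Your proof is correct and is essentially the paper's own argument in condensed form: the paper uses the matched statistics of \cref{eq:projConstr1} to derive $\tilde{\Pi}^{B}_{t|y}\ket{\Psi} = \tilde{\Pi}^{B}_{t|y}\tilde{\Pi}^{A}_{t|x}\ket{\Psi}$ and $\tilde{\Pi}^{A}_{t|x}\ket{\Psi} = \tilde{\Pi}^{B}_{t|y}\tilde{\Pi}^{A}_{t|x}\ket{\Psi}$ as two separate annihilation steps, which is exactly what your single norm expansion $\|(\tilde{\Pi}^{A}_{t|x}-\tilde{\Pi}^{B}_{t|y})\ket{\Psi}\|^{2} = \langle\tilde{\Pi}^{A}_{t|x}\rangle + \langle\tilde{\Pi}^{B}_{t|y}\rangle - 2\langle\tilde{\Pi}^{A}_{t|x}\tilde{\Pi}^{B}_{t|y}\rangle = 0$ packages into one computation, relying on the same structural inputs (projectivity via Naimark and commutativity across tensor factors). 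Your side remarks also match the paper: the maximal violation $\langle I\rangle = 2\sqrt{2}$ plays no role in this lemma, only the flag constraints of \cref{eq:projConstr1} do.
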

    \begin{proof}
    Note the following consequence of \cref{eq:projConstr1}: 
    \begin{equation}
        \begin{aligned}
            0 &= \bra{\Psi}\tilde{\Pi}^{B}_{t|y}\ket{\Psi} - \bra{\Psi}\tilde{\Pi}^{A}_{t|x}\tilde{\Pi}^{B}_{t|y}\ket{\Psi}\\
            &= \bra{\Psi}\tilde{\Pi}^{B}_{t|y}(\id - \tilde{\Pi}^{A}_{t|x})\ket{\Psi} \\
            &= \bra{\Psi} \tilde{\Pi}^{B}_{t|y} \tilde{\Pi}^{A}_{t\oplus 1|x} \ket{\Psi},
        \end{aligned}
    \end{equation}
    where in the third equality we used the fact that $\sum_{a,t}\tilde{M}_{(a,t)|x}\ket{\Psi} = \ket{\Psi}$. We therefore see $0 = \tilde{\Pi}^{B}_{t|y} \tilde{\Pi}^{A}_{t\oplus 1|x} \ket{\Psi} = \tilde{\Pi}^{B}_{t|y} (\id-\tilde{\Pi}^{A}_{t|x}) \ket{\Psi}$, hence $\tilde{\Pi}^{B}_{t|y}  \ket{\Psi} = \tilde{\Pi}^{B}_{t|y} \tilde{\Pi}^{A}_{t|x} \ket{\Psi}$. Using a similar argument $\tilde{\Pi}^{A}_{t|x}  \ket{\Psi} = \tilde{\Pi}^{B}_{t|y} \tilde{\Pi}^{A}_{t|x} \ket{\Psi}$, which implies $\tilde{\Pi}^{B}_{t|y}  \ket{\Psi} = \tilde{\Pi}^{A}_{t|x}  \ket{\Psi}$. We also find $\tilde{\Pi}^{C}_{t|z}  \ket{\Psi} = \tilde{\Pi}^{A}_{t|x}  \ket{\Psi}$, proving the claim.
    \end{proof}

    Having established these facts, we can now present the proof of \ifarxiv\cref{prop:ent}\else Proposition~1\fi.

    \vspace{0.3cm}

\noindent \textbf{Proposition 1.} \textit{Let $\mathcal{F}$ be the class of quantum models defined above. Then we have the following:
    \begin{equation}
\inf_{\mathcal{F}} H(A|X=0,T=t,E) = 1, \label{eq:p1rate}
    \end{equation} 
    where the von Neumann entropy is evaluated on the post-measurement state
    \begin{equation}
        \rho_{AE|X=0,T=t} = \frac{1}{p_{T|X=0}(t)}\sum_{a \in \{0,1\}} \ketbra{a}{a}_{A} \\ 
        \otimes  \mathrm{Tr}_{\tilde{Q}_{A}\tilde{Q}_{B}\tilde{Q}_{C}}\Big[ (\tilde{M}_{(a,t)|0}\otimes \id_{\tilde{Q}_{B}\tilde{Q}_{C}E})\ketbra{\Psi}{\Psi}\Big], 
    \end{equation}\label{eq:app_pms}
    where $p_{T|X=0}(t) := \sum_{a \in \{0,1\}}\bra{\Psi} (\tilde{M}_{(a,t)|0}\otimes \id_{\tilde{Q}_{B}\tilde{Q}_{C}E})\ket{\Psi}$.}
\begin{proof}    
    We begin by finding an SOS decomposition for $2\sqrt{2}\id - I$. To achieve this, we find decompositions for $I_{\mathrm{CHSH}}^{AB,t}$ and $I_{\mathrm{CHSH}}^{AC,t}$ by modifying an existing decomposition for the CHSH functional (see, e.g.,~\cite{BampsPironio,Cui_2020}). By direct calculation,
    \begin{equation}
        0 \preceq \frac{\sqrt{2}}{4}(\tilde{A}_{0,t} + \tilde{A}_{1,t} -\sqrt{2}\tilde{B}_{0,t})^{2} 
        + \frac{\sqrt{2}}{4}(\tilde{A}_{0,t} - \tilde{A}_{1,t} -\sqrt{2}\tilde{B}_{1,t})^{2}  = \frac{\sqrt{2}}{2}(\tilde{\Pi}_{t|0}^{A}+\tilde{\Pi}_{t|1}^{A}+\tilde{\Pi}_{t|0}^{B}+\tilde{\Pi}_{t|1}^{B}) - I_{\mathrm{CHSH}}^{AB,t}. \label{eq:sos1}
    \end{equation}
    The equality holds from the definition of the operators $\tilde{A}_{x,t},\tilde{B}_{y,t}$, e.g., $(\tilde{A}_{x,t})^{2} = \tilde{M}_{(0,t)|x} + \tilde{M}_{(1,t)|x} = \tilde{\Pi}_{t|x}^{A}$ (recall we have extended the POVMs to be orthogonal projectors). More intuitively, the above equation implies
    \begin{equation}
        \frac{\sqrt{2}}{2} \langle \tilde{\Pi}_{t|0}^{A}+\tilde{\Pi}_{t|1}^{A}+\tilde{\Pi}_{t|0}^{B}+\tilde{\Pi}_{t|1}^{B}\rangle  =  2\sqrt{2} \, p_{T}(t) \geq \langle  I_{\mathrm{CHSH}}^{AB,t}\rangle, \label{eq:tbound2}
    \end{equation}
    where we used relation $(i)$ of \cref{lem:proj}. In words, the Tsirelson bound of $I_{\mathrm{CHSH}}^{AB,t}$, for quantum models that do not cause the protocol to abort, is given by the maximum CHSH violation, weighted by the probability of observing the outcome $T=t$. A similar decomposition can be written for $I^{AC,t}_{\mathrm{CHSH}}$. Let us write $\tilde{\Gamma}^{AB}_{t} = \tilde{\Pi}_{t|0}^{A}+\tilde{\Pi}_{t|1}^{A}+\tilde{\Pi}_{t|0}^{B}+\tilde{\Pi}_{t|1}^{B}$, and similarly $\tilde{\Gamma}^{AC}_{t}$. Then, by the above,
    \begin{equation}
        \begin{aligned}
            \frac{\sqrt{2}}{2}(\tilde{\Gamma}^{AB}_{0} + \tilde{\Gamma}^{AC}_{1}) - I = \Big(\frac{\sqrt{2}}{2}\tilde{\Gamma}^{AB}_{0} - I_{\mathrm{CHSH}}^{AB,0}\Big) + \Big(\frac{\sqrt{2}}{2}\tilde{\Gamma}^{AC}_{1} - I_{\mathrm{CHSH}}^{AC,1}\Big)  = \sum_{i}P_{i}^{\dagger}P_{i} + \sum_{j}Q_{j}^{\dagger}Q_{j}\succeq 0,
        \end{aligned} \label{eq:SOS}
    \end{equation}
    where $\{P_{i}\}_{i},\{Q_{j}\}_{j}$ collect all the polynomials used to define the SOS decompositions for $\frac{\sqrt{2}}{2}\tilde{\Gamma}^{AB}_{0} - I^{AB,t=0}_{\mathrm{CHSH}}$ and $\frac{\sqrt{2}}{2}\tilde{\Gamma}^{AC}_{1} -I^{AC,t=1}_{\mathrm{CHSH}}$, respectively, i.e., from \eqref{eq:sos1} we have $P_{0} = 2^{-3/4}(\tilde{A}_{0,t} + \tilde{A}_{1,t} -\sqrt{2}\tilde{B}_{0,t})$, $P_{1} = 2^{-3/4}(\tilde{A}_{0,t} - \tilde{A}_{1,t} -\sqrt{2}\tilde{B}_{1,t})$ etc. Notice 
    \begin{equation}
        (\tilde{\Gamma}_{0}^{AB} + \tilde{\Gamma}_{1}^{AC})\ket{\Psi} = 4\ket{\Psi},
    \end{equation}
    where we applied \cref{lem:proj} and \eqref{eq:proj1}. Therefore, given the protocol did not abort, 
    \begin{equation}
        \bar{I}\ket{\Psi} := (2\sqrt{2}\id - I)\ket{\Psi} = \sum_{i}P_{i}^{\dagger}P_{i}\ket{\Psi} + \sum_{j}Q_{j}^{\dagger}Q_{j}\ket{\Psi}.
    \end{equation}
    In particular, we see that $I$ has a Tsirelson bound of $2\sqrt{2}$ in this case. 
    
    Now, suppose the protocol did not abort and we achieve this bound, i.e., observe $\langle I \rangle = 2\sqrt{2}$. Then we can immediately conclude
    \begin{equation}
        2\sqrt{2} - \langle I \rangle =  \Big\langle\sum_{i}P_{i}^{\dagger}P_{i} \Big\rangle + \Big\langle\sum_{j}Q_{j}^{\dagger}Q_{j}\Big\rangle = 0.
    \end{equation}
    As a result, the expectation of each SOS polynomial must be equal to zero. Using \cref{eq:tbound2}, we find (recall the protocol aborts if $p_{T}(0)  = 0$ or $p_{T}(1) = 0$)
    \begin{equation}
    \begin{aligned}
        \frac{\langle I_{\mathrm{CHSH}}^{AB,0} \rangle}{p_{T}(0)} = 2\sqrt{2}, \ \ \mathrm{and} \ \ \frac{\langle I_{\mathrm{CHSH}}^{AC,1} \rangle}{p_{T}(1)} = 2\sqrt{2}.
    \end{aligned}
     \label{eq:CHSHconst}
    \end{equation}
    
    Let us define the conditional distributions:
    \begin{equation}
    \begin{aligned}
        p_{AB|XY,T=t}(ab|xy) &:= \frac{1}{p_{T}(t)}\sum_{c}p(abct|xyz) = \frac{1}{p_{T}(t)}\bra{\Psi} \tilde{\Pi}_{t|z}^{C}\tilde{M}_{(a,t)|x}\tilde{N}_{(b,t)|y}\tilde{\Pi}_{t|z}^{C}\ket{\Psi}, \\
        p_{AC|XZ,T=t}(ac|xz) &:= \frac{1}{p_{T}(t)}\sum_{b}p(abct|xyz) = \frac{1}{p_{T}(t)}\bra{\Psi} \tilde{\Pi}_{t|y}^{B}\tilde{M}_{(a,t)|x}\tilde{O}_{(c,t)|z}\tilde{\Pi}_{t|y}^{B}\ket{\Psi},
    \end{aligned}
     \label{eq:prob}
    \end{equation}
    where we used the fact that $\sum_{c}\tilde{O}_{(c,t)|z}\ket{\Psi} = \tilde{\Pi}_{t|z}^{C}\ket{\Psi}$, and that $\tilde{\Pi}_{t|z}^{C}$ is a projector which commutes with $\tilde{M}_{(a,t)|x}\tilde{N}_{(b,t)|y}$. Similar steps were taken for $\tilde{N}_{(b,t)|y}$ in the second line. Note the choice of $z$ is irrelevant in the first definition (likewise $y$ in the second) following the no-signalling conditions.

    Inserting these into the definitions of $\langle I_{\mathrm{CHSH}}^{AB,0} \rangle$ and $\langle I_{\mathrm{CHSH}}^{AC,1} \rangle$ we have
    \begin{equation}
        \frac{\langle I_{\mathrm{CHSH}}^{AB,0} \rangle}{p_{T}(0)} = \sum_{abxy} (-1)^{a+b+xy}p_{AB|T=0}(ab|xy), \ \ \mathrm{and} \ \ \frac{\langle I_{\mathrm{CHSH}}^{AC,1} \rangle}{p_{T}(1)} = \sum_{acxz} (-1)^{a+c+xz}p_{AC|T=1}(ac|xz).
    \end{equation}
    Combined with \cref{eq:CHSHconst}, the distributions $\{p_{AB|XY,T=0}(ab|xy)\}_{abxy},\{p_{AC|XZ,T=1}(ac|xz)\}_{acxz}$ maximally violate the CHSH inequality, which is well known for its self-testing properties (see e.g.,~\cite{BampsPironio}). In particular, quantum behaviours that achieve the maximum are extreme (and nonlocal) in the quantum set $\mathcal{Q}$. In the remainder of the proof, we will show that both $p_{AB|XY,T=0}$ and $p_{AC|XZ,T=1}$ are quantum realizable, and hence extremal, which can be used to prove security using the results of Ref.~\cite{Franz_2011}.

    Towards this end, let
    \begin{equation}
        \ket{\Psi_{0}} := \frac{\tilde{\Pi}_{0|z}^{C} \ket{\Psi}}{\sqrt{p_{T}(0)}}, \ \ \mathrm{and} \ \ \ket{\Psi_{1}} := \frac{\tilde{\Pi}_{1|y}^{B} \ket{\Psi}}{\sqrt{p_{T}(1)}}, \label{eq:projState}
    \end{equation}
    allowing us to write
    \begin{equation}
    \begin{aligned}
    p_{AB|XY,T=0}(ab|xy) = \bra{\Psi_{0}}\tilde{M}_{(a,0)|x}\tilde{N}_{(b,0)|y}\ket{\Psi_{0}}, \ \ \mathrm{and} \ \ 
    p_{AC|XZ,T=1}(ac|xz) = \bra{\Psi_{1}}\tilde{M}_{(a,1)|x}\tilde{O}_{(c,1)|z}\ket{\Psi_{1}}. \label{eq:probs}
    \end{aligned}
    \end{equation}  
    For every quantum model $(\mathcal{H}_{\tilde{Q}_{A}},\mathcal{H}_{\tilde{Q}_{B}},\mathcal{H}_{\tilde{Q}_{C}},\mathcal{H}_{E},\ket{\Psi},\{\{\tilde{M}_{(a,t)|x}\}_{a,t}\}_{x},\{\{\tilde{N}_{(b,t)|y}\}_{b,t}\}_{y},\{\{\tilde{O}_{(c,t)|z}\}_{c,t}\}_{z})$, define the subset of $\mathcal{H} = \mathcal{H}_{\tilde{Q}_{A}} \otimes \mathcal{H}_{\tilde{Q}_{B}} \otimes \mathcal{H}_{\tilde{Q}_{C}} \otimes \mathcal{H}_{E}$:
    \begin{equation}
    \begin{aligned}
        \mathcal{H}_{0} &:= \Big \{\tilde{\Pi}_{0|z}^{C}\ket{\phi} \ : \ \ket{\phi} \in \mathcal{H}, \ \tilde{\Pi}^{A}_{t|x}  \ket{\phi} = \tilde{\Pi}^{B}_{t|y}  \ket{\phi} = \tilde{\Pi}^{C}_{t|z} \ket{\phi} \ \forall x,y,z,t \Big\} \subset \mathcal{H}, \\
        \mathcal{H}_{1} &:= \Big \{\tilde{\Pi}_{1|y}^{B}\ket{\phi} \ : \ \ket{\phi} \in \mathcal{H}, \ \tilde{\Pi}^{A}_{t|x}  \ket{\phi} = \tilde{\Pi}^{B}_{t|y}  \ket{\phi} = \tilde{\Pi}^{C}_{t|z} \ket{\phi} \ \forall x,y,z,t \Big\} \subset \mathcal{H}.
    \end{aligned}
    \label{eq:subspace}
    \end{equation}
    We remark that every $\ket{\psi} \in \mathcal{H}_{t}$ must then satisfy \cref{eq:projConstr1} by definition. 
    \begin{lemma}
        Let $(\mathcal{H}_{\tilde{Q}_{A}},\mathcal{H}_{\tilde{Q}_{B}}, \mathcal{H}_{\tilde{Q}_{C}},\mathcal{H}_{E},\ket{\Psi},\{\{\tilde{M}_{(a,t)|x}\}_{a,t}\}_{x},\{\{\tilde{N}_{(b,t)|y}\}_{b,t}\}_{y},\{\{\tilde{O}_{(c,t)|z}\}_{c,t}\}_{z})$ be a quantum model in $\mathcal{F}$. Let $\mathcal{H}_{t}$ and $\ket{\Psi_{t}}$ be defined with respect to that model, as given by \cref{eq:subspace} and \cref{eq:projState} respectively. Then the following statements are true: 
        \begin{enumerate}[(i)]
            \item For $t \in \{0,1\}$, the tuple $(\mathcal{H}_{t},\ket{\Psi_{t}},\{\{\tilde{M}_{(a,t)|x}\}_{a}\}_{x},\{\{\tilde{N}_{(b,t)|y}\}_{b}\}_{y},\{\{\tilde{O}_{(c,t)|z}\}_{c}\}_{z})$ is a (finite dimensional) commuting operator model. 
            \item The tuple $(\mathcal{H}_{0},\ket{\Psi_{0}},\{\{\tilde{M}_{(a,0)|x}\}_{a}\}_{x},\{\{\tilde{N}_{(b,0)|y}\}_{b}\}_{y})$ is a (finite dimensional) commuting operator model for $p_{AB|XY,T=0}(ab|xy)$. 
            \item The tuple $(\mathcal{H}_{1},\ket{\Psi_{1}},\{\{\tilde{M}_{(a,1)|x}\}_{a}\}_{x},\{\{\tilde{O}_{(c,1)|z}\}_{c}\}_{z})$ is a (finite dimensional) commuting operator model for $p_{AC|XZ,T=1}(ac|xz)$. 
        \end{enumerate} \label{lem:subspace}
    \end{lemma}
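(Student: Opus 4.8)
The plan is to give each tuple an explicit, tractable Hilbert space and then show that the measurement operators, once suitably compressed, inherit a tensor-product (and in particular a commuting operator) structure. The first step is to re-characterise the subspaces in \cref{eq:subspace}. Combining the consistency condition in their definition with the orthogonality of the flag projectors for a fixed input (\cref{eq:proj1}), I would show that
\begin{equation*}
\mathcal{H}_t = \{\ket{\psi}\in\mathcal{H} : \tilde{\Pi}^A_{t|x}\ket{\psi}=\tilde{\Pi}^B_{t|y}\ket{\psi}=\tilde{\Pi}^C_{t|z}\ket{\psi}=\ket{\psi}\ \ \forall x,y,z\},
\end{equation*}
i.e. that $\mathcal{H}_t$ is exactly the joint $+1$-eigenspace of all flag-$t$ projectors. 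Since a model in $\mathcal{F}$ is a tensor-product model, each $\tilde{\Pi}^P_{t|x}$ acts on a single factor, so this eigenspace factorises as $\mathcal{H}_t = R^A_t\otimes R^B_t\otimes R^C_t\otimes\mathcal{H}_E$, where $R^P_t := \mathrm{range}(\tilde{\Pi}^P_{t|0})\cap\mathrm{range}(\tilde{\Pi}^P_{t|1})$ for each party $P$. \cref{lem:proj} then yields $\ket{\Psi_t}\in\mathcal{H}_t$ directly: applying any flag-$t$ projector to $\ket{\Psi_t}$ and commuting it past the single projector defining $\ket{\Psi_t}$ in \cref{eq:projState} returns $\ket{\Psi_t}$ by \cref{eq:projprop}.

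Next I would define compressed operators $\hat{M}_{(a,t)|x}:=P^A_t\,\tilde{M}_{(a,t)|x}\,P^A_t$ on $R^A_t$, and likewise $\hat{N},\hat{O}$ on $R^B_t,R^C_t$, where $P^P_t$ projects onto $R^P_t$. Compression is genuinely necessary here: the original $\tilde{M}_{(a,t)|x}$ does \emph{not} preserve $\mathcal{H}_t$, because $\tilde{\Pi}^A_{t|x'}\tilde{M}_{(a,t)|x}\ket{\psi}$ need not return the vector when $x'\neq x$. On the factorised space the model axioms are then routine: positivity is immediate; the POVM condition holds since $\sum_a\hat{M}_{(a,t)|x}=P^A_t\,\tilde{\Pi}^A_{t|x}\,P^A_t=\id_{R^A_t}$, using that $\tilde{\Pi}^A_{t|x}$ restricts to the identity on $R^A_t$; and, crucially, operators of different parties act on different tensor factors $R^A_t,R^B_t,R^C_t$ and therefore commute. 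This exhibits a tensor-product (hence commuting operator) model and settles (i).

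For (ii) and (iii) it remains to match the probabilities, and the key observation is that compression changes nothing between vectors lying in the relevant eigenspaces: because every $\ket{\alpha},\ket{\alpha'}\in R^A_t$ satisfy $P^A_t\ket{\alpha}=\ket{\alpha}$, one has $\bra{\alpha}\hat{M}_{(a,t)|x}\ket{\alpha'}=\bra{\alpha}\tilde{M}_{(a,t)|x}\ket{\alpha'}$, and similarly for $\hat{N},\hat{O}$. Expanding $\ket{\Psi_t}$ in a basis adapted to the factorisation, the compressed expectation $\bra{\Psi_t}\hat{M}_{(a,0)|x}\hat{N}_{(b,0)|y}\ket{\Psi_t}$ thus equals the uncompressed $\bra{\Psi_t}\tilde{M}_{(a,0)|x}\tilde{N}_{(b,0)|y}\ket{\Psi_t}$, which is precisely $p_{AB|XY,T=0}(ab|xy)$ by \cref{eq:probs}; the identical argument with Carole in place of Bob (using $\ket{\Psi_1}$ and $R^A_1,R^C_1$) gives (iii). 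Finite dimensionality of the honest factors can be assumed without loss of generality, and is in any case forced downstream once maximal CHSH violation self-tests the qubit structure.

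The main obstacle I anticipate is exactly this non-invariance of $\mathcal{H}_t$ under the measurement operators: one cannot simply restrict $\tilde{M},\tilde{N},\tilde{O}$ to $\mathcal{H}_t$, and the whole construction hinges on recognising that $\mathcal{H}_t$ splits into each party's common flag-$t$ eigenspace, so that compressing every party to its own factor simultaneously restores a genuine tensor-product structure (guaranteeing commutation across parties) and leaves all in-subspace matrix elements---and hence the certified bipartite statistics---untouched.
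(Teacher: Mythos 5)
Your proposal is correct, and it takes a genuinely different route from the paper's own proof. The paper keeps the original operators and argues that each $\tilde{M}_{(a,t)|x},\tilde{N}_{(b,t)|y},\tilde{O}_{(c,t)|z}$ maps $\mathcal{H}_t$ into itself, so that their restrictions form the desired model and the probability matching is immediate from \cref{eq:probs}; you instead re-characterise $\mathcal{H}_t$ as the joint $+1$-eigenspace of all flag-$t$ projectors (this equivalence is genuinely true: one inclusion follows by commuting projectors of distinct parties and using the agreement constraints, the other uses $\sum_t \tilde{\Pi}^A_{t|x}=\id$ to recover the ``for all $t$'' agreement), factorise it as $R^A_t\otimes R^B_t\otimes R^C_t\otimes\mathcal{H}_E$ via the tensor structure of models in $\mathcal{F}$, and compress each party's operators to its own factor. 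This buys two things. First, you produce a tensor-product model outright, so the appeal to the equivalence of finite-dimensional commuting-operator and tensor-product models made after \cref{lem:subspace} becomes superfluous in your version. Second, your remark that compression is ``genuinely necessary'' is well founded and in fact flags the weakest step of the paper's proof: there, from $\bra{\phi}(\id-\tilde{\Pi}^A_{0|x'})\tilde{M}_{(a,0)|x}\ket{\phi}=0$ for all $\ket{\phi}$ in the constraint subspace $V$, the paper deduces $(\id-\tilde{\Pi}^A_{0|x'})\tilde{M}_{(a,0)|x}\ket{\phi}=0$, but polarization only yields $P_V(\id-\tilde{\Pi}^A_{0|x'})\tilde{M}_{(a,0)|x}P_V=0$ (with $P_V$ the projector onto $V$), i.e.\ precisely a statement about the compressed operators; indeed one can build models in $\mathcal{F}$ (direct-sum an honest model with a ``junk'' block invisible to $\ket{\Psi}$) where invariance of $\mathcal{H}_t$ genuinely fails. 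Since your compressed operators inherit positivity, normalization and cross-party commutation trivially, and leave all matrix elements between vectors of $\mathcal{H}_t$ (hence the behaviours $p_{AB|XY,T=0}$, $p_{AC|XZ,T=1}$, and everything \cref{lem:dec_new} needs downstream) unchanged, your argument establishes all that \cref{prop:ent} requires, and does so more robustly than the paper's. The only point you wave through, finite dimensionality, is treated at the same level of rigor in the paper itself.
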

    \begin{proof}

    $\mathcal{H}_{t}$ is a subspace of $\mathcal{H}$ since it contains all linear combinations of its elements, and note that every subspace of a finite dimensional Hilbert space is closed and, in turn, is therefore also a Hilbert space~\cite{kreyszig1991introductory}. 
        
    Now, we show that every operator in the set $\{\tilde{M}_{(a,t)|x},\tilde{N}_{(b,t)|y},\tilde{O}_{(c,t)|z}\}_{a,b,c,x,y,z}$, maps $\mathcal{H}_{t}$ to itself (for a fixed $t$). To see this, consider $t=0$, and let $\ket{\psi}\in \mathcal{H}_{0}$. Then we can write $\ket{\psi} = \tilde{\Pi}^{C}_{0|z}\ket{\phi}$ for some state $\ket{\phi}$ satisfying the projection constraints in \cref{eq:subspace}. Let $\ket{\psi'} = \tilde{M}_{(a,0)|x}\ket{\psi}$ for some $a,x \in \{0,1\}$. We will show the following:
        \begin{enumerate}[(a)]
            \item $\ket{\psi'} = \tilde{\Pi}^{C}_{0|z}\ket{\phi'}$, where $\ket{\phi'} = \tilde{M}_{(a,0)|x}\ket{\phi}$.
            \item $\tilde{\Pi}^{A}_{0|x'}\ket{\phi'} = \tilde{\Pi}^{B}_{0|y}\ket{\phi'} = \tilde{\Pi}^{C}_{0|z'}\ket{\phi'}$, $\forall x',y,z'$.
        \end{enumerate}
    Point (a) can be established from the fact $\ket{\psi'} = \tilde{M}_{(a,0)|x}\ket{\psi} = \tilde{M}_{(a,0)|x}\tilde{\Pi}_{0|z}^{C}\ket{\phi} = \tilde{\Pi}_{0|z}^{C}\tilde{M}_{(a,0)|x}\ket{\phi} = \tilde{\Pi}_{0|z}^{C}\ket{\phi'}$. For (b), first note that $\ket{\phi'} = \tilde{M}_{(a,0)|x}\ket{\phi} = \tilde{M}_{(a,0)|x}\tilde{\Pi}^{A}_{0|x}\ket{\phi} = \tilde{M}_{(a,0)|x}\tilde{\Pi}^{B}_{0|y}\ket{\phi} = \tilde{\Pi}^{B}_{0|y}\ket{\phi'}$, where we used the projection constraints on $\ket{\phi}$ in \cref{eq:subspace}. Similarly, $\ket{\phi'} = \tilde{\Pi}^{C}_{0|z'}\ket{\phi'}$, and since $\tilde{M}_{(a,0)|x}\ket{\phi} = \tilde{\Pi}^{A}_{0|x}\tilde{M}_{(a,0)|x}\ket{\phi}$, we have $\ket{\phi'} = \tilde{\Pi}^{A}_{0|x}\ket{\phi'}$. What we also need to establish is for $x' \neq x$, $\ket{\phi'} = \tilde{\Pi}^{A}_{0|x'}\ket{\phi'}$. Notice, for any state $\ket{\phi}$ satisfying the projection constraints in \cref{eq:subspace},
        \begin{equation}
            \bra{\phi}\tilde{M}_{(a,0)|x}\ket{\phi} =  \bra{\phi}\tilde{\Pi}_{0|x}^{A}\tilde{M}_{(a,0)|x}\ket{\phi} = \bra{\phi}\tilde{\Pi}_{0|x'}^{A}\tilde{M}_{(a,0)|x}\ket{\phi},
        \end{equation}
        which implies 
        \begin{equation}
            \bra{\phi}(\id - \tilde{\Pi}_{0|x'}^{A})\tilde{M}_{(a,0)|x}\ket{\phi} =  0.
        \end{equation}
        The set of states $\ket{\phi}$ satisfying the projection constraints in \cref{eq:subspace} forms a Hilbert space, and since the above equation holds for all such states, we can deduce $(\id - \tilde{\Pi}_{0|x'}^{A})\tilde{M}_{(a,0)|x}\ket{\phi} =  0$, and therefore $\ket{\phi'} = \tilde{\Pi}_{0|x'}^{A}\ket{\phi'}$. This establishes (b), and therefore $\ket{\psi'} \in \mathcal{H}_{0}$, and subsequently $\tilde{M}_{(a,0)|x} \in \mathrm{P}(\mathcal{H}_{0})$. The same steps can be followed to also show $\tilde{N}_{(b,0)|y},\tilde{O}_{(c,0)|z} \in \mathrm{P}(\mathcal{H}_{0})$, and a similar argument can be made for the case $t=1$. 
        
        Finally, for every $\ket{\psi} \in \mathcal{H}_{t}$,
    \begin{equation}
        \sum_{a \in \{0,1\}}\tilde{M}_{(a,t)|x}\ket{\psi} = \tilde{\Pi}_{t|x}^{A}\ket{\psi} = \ket{\psi},
    \end{equation}
    and similarly for $\tilde{N}_{(b,t)|y}$ and $\tilde{O}_{(c,t)|z}$ (cf. \cref{eq:proj1}).
    
    We recall that the tuple
    \begin{equation}
        \mathsf{q} = \Big( \mathcal{H}_{\tilde{Q}},\mathcal{H}_{E},\ket{\Psi},\{\{\tilde{M}_{a|x}\}_{a}\}_{x},\{\{\tilde{N}_{b|y}\}_{b}\}_{y}\Big)
    \end{equation}
    is a commuting operator quantum model if $\ket{\Psi} \in \mathcal{H}_{\tilde{Q}} \otimes \mathcal{H}_{E}$, and $\{\{\tilde{M}_{a|x}\}_{a}\}_{x},\{\{\tilde{N}_{b|y}\}_{b}\}_{y}$ are sets of commuting projective measurements on $\mathcal{H}_{\tilde{Q}}$. Since all pairs of measurement operators corresponding to distinct parties commute with each other, the model in $(i)$ is a commuting operator model, proving the claim. Claims $(ii)$ and $(iii)$ follow identically.  
    \end{proof}

    It is known that the set of bipartite quantum behaviours arising from finite dimensional commuting operator models coincides with the set arising from finite dimensional tensor product models (see, e.g.,~\cite{scholz2008,Navascu_s_2012}). Thus, the behaviours $p_{AB|XY,T=0}$ and $p_{AC|XZ,T=1}$ are quantum, maximally violate the CHSH inequality, and are therefore extremal. In the following, we will make use of the following properties of extremal distributions.

    \begin{definition}[\cite{Franz_2011}]
        Let $p_{AB|XY} =\{p_{AB|XY}(ab|xy)\}$ be a behaviour in a bipartite Bell scenario. The behaviour $p_{AB|XY}$ is \emph{secure} if, for all $a,b,x,y$, $p_{AB|XY}(ab|xy) \neq p_{A|X}(a|x)p_{B|Y}(b|y)$, and the following holds for all commuting operator models compatible with $p_{AB|XY}$. For any operator $F$ on $\mathcal{H}_{E}$,
    \begin{equation}
        \bra{\Psi}\tilde{M}_{a|x} \tilde{N}_{b|y} \otimes F\ket{\Psi} = \bra{\Psi}\id \otimes F\ket{\Psi} \, p(ab|xy). \label{eq:secDef}
    \end{equation} \label{def:secure}
    \end{definition}

    \begin{lemma}[\cite{Franz_2011}]
        A behaviour $p_{AB|XY}$ is secure if and only if it is nonlocal and extremal in $\mathcal{Q}$. \label{lem:sec}
    \end{lemma}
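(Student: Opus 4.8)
The plan is to prove the equivalence through a single structural principle: operations Eve can perform on $\mathcal{H}_E$ are in one-to-one correspondence with convex decompositions of $p_{AB|XY}$ inside $\mathcal{Q}$. Concretely, for any quantum model of $p$ with purification $\ket{\Psi}$ and any operator $0 \le F \le \id$ on $\mathcal{H}_E$ with $0 < \langle F \rangle < 1$, the reduced operator $\tilde\rho_{AB} = \tr_E[(\id \otimes F)\ketbra{\Psi}{\Psi}]/\langle F \rangle$ is a normalised state, so the conditional behaviour $p_F(ab|xy) := \langle \tilde{M}_{a|x}\tilde{N}_{b|y} \otimes F\rangle/\langle F\rangle$ lies in $\mathcal{Q}$, being realised by $\tilde\rho_{AB}$ with the \emph{same} measurements. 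Since $\{F, \id - F\}$ is a two-outcome POVM, this produces $p = \langle F\rangle\, p_F + \langle \id - F\rangle\, p_{\id-F}$, a convex combination of quantum behaviours. Everything hangs on this engine.

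First I would establish \emph{extremal $\Rightarrow$ decoupling} by contraposition. If \cref{eq:secDef} fails there is a model, an operator $F$, and a quadruple $a,b,x,y$ with $\langle \tilde{M}_{a|x}\tilde{N}_{b|y}\otimes F\rangle \neq \langle F\rangle\, p(ab|xy)$. Because $\tilde{M}_{a|x}\tilde{N}_{b|y}$ is a projector (the measurements are projective and commute across parties), one has $0 \le \langle \tilde{M}_{a|x}\tilde{N}_{b|y}\otimes F\rangle \le \langle F\rangle$, so the inequality forces $0 < \langle F\rangle < 1$; hence $p_F \neq p$ and the decomposition above is nontrivial, contradicting extremality. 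A general operator $F$ reduces to the positive case by splitting into real/imaginary and positive/negative parts and rescaling into $[0,\id]$, so extremality yields \cref{eq:secDef} for all $F$.

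Next, \emph{decoupling $\Rightarrow$ extremal}. Given any nontrivial decomposition $p = \lambda q + (1-\lambda) r$ with $q \neq r$ in $\mathcal{Q}$ and $\lambda\in(0,1)$, I would assemble a single model of $p$ whose Eve register carries a flag distinguishing the two components: take the block-diagonal (direct-sum) measurements built from models of $q$ and $r$, and the state $\ket{\Psi} = \sqrt{\lambda}\ket{\Psi_q} + \sqrt{1-\lambda}\ket{\Psi_r}$ with the summands supported on orthogonal blocks, so $\langle \tilde{M}_{a|x}\tilde{N}_{b|y}\rangle = p(ab|xy)$. Letting $F$ be the projector onto the ``$q$-block'' of $\mathcal{H}_E$ gives $\langle F\rangle = \lambda$ and $\langle \tilde{M}_{a|x}\tilde{N}_{b|y}\otimes F\rangle = \lambda\, q(ab|xy)$, so \cref{eq:secDef} reads $\lambda q = \lambda p$, i.e. $q = p$, and symmetrically $r = p$, contradicting $q \neq r$. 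This settles decoupling $\Leftrightarrow$ extremal, the technical heart of the lemma; the admissibility of the direct-sum construction within $\mathcal{Q}$ relies on the finite-dimensional commuting/tensor equivalence already noted above.

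It remains to align the nonlocality clause. For \emph{secure $\Rightarrow$ nonlocal}, security supplies extremality and non-factorisation; were $p$ local it would, being extremal, be an extreme point of the local polytope, hence a deterministic product $p(ab|xy)=\delta_{a,f(x)}\delta_{b,g(y)}=p_A(a|x)p_B(b|y)$, contradicting non-factorisation. The reverse, \emph{nonlocal $+$ extremal $\Rightarrow$ non-factorisation for all $a,b,x,y$}, is the step I expect to be the main obstacle, since one must exclude any accidental product $p(ab|xy)=p_A(a|x)p_B(b|y)$. In the binary-input/output scenario defining $\mathcal{Q}$ I would argue that nonlocality forces every local observable to be nondegenerate (a deterministic setting collapses CHSH to its local bound), so all marginals lie in $(0,1)$, and then invoke the rigidity (self-testing) of extremal points to rule out an isolated factorising outcome. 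As the general statement is attributed to \cite{Franz_2011}, I would lean on that reference here; crucially, for our application the maximally CHSH-violating behaviours $p_{AB|XY,T=0}$ and $p_{AC|XZ,T=1}$ satisfy $p(ab|xy)=\tfrac14(1 \pm \tfrac{1}{\sqrt2}) \neq \tfrac14 = p_A(a|x)p_B(b|y)$ by direct computation, so the clause holds in the case we actually need.
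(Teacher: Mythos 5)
The first thing to note is that the paper does not prove \cref{lem:sec} at all: both \cref{def:secure} and the lemma are imported verbatim from \cite{Franz_2011}, so your attempt can only be judged against that reference, not against an in-paper argument. Your central mechanism --- that a two-outcome POVM $\{F,\id-F\}$ on $\mathcal{H}_E$ induces the convex splitting $p=\langle F\rangle\, p_F+\langle \id-F\rangle\, p_{\id-F}$ with $p_F\in\mathcal{Q}$, and conversely that any nontrivial splitting $p=\lambda q+(1-\lambda)r$ is realized by a direct-sum model in which the flag projector on Eve's block forces $q=r=p$ via \cref{eq:secDef} --- is exactly the Franz--Furrer--Werner mechanism, and your execution of both directions is sound: the reduction of general $F$ to $0\le F\le\id$ by linearity, the observation that $0\le\langle\tilde{M}_{a|x}\tilde{N}_{b|y}\otimes F\rangle\le\langle F\rangle$ forces $0<\langle F\rangle<1$ whenever \cref{eq:secDef} fails, and the deduction secure $\Rightarrow$ nonlocal (a local point extremal in $\mathcal{Q}$ is extremal in the local polytope, hence deterministic, hence factorizing) are all correct. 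Your caveat about conditional behaviours remaining in $\mathcal{Q}$ is also apposite; in the binary scenario every commuting operator model decomposes into qubit blocks (Jordan's lemma / Tsirelson), consistent with the commuting-versus-tensor equivalence the paper invokes immediately before applying the lemma.

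The one genuine gap is the one you flag yourself: nonlocal $+$ extremal $\Rightarrow$ $p(ab|xy)\neq p_{A|X}(a|x)p_{B|Y}(b|y)$ for \emph{every} $a,b,x,y$. Your sketch via ``rigidity of extremal points'' is not an argument, and deferring to \cite{Franz_2011} for this clause makes your proof incomplete as a self-contained derivation (though no more incomplete than the paper, which cites the entire lemma). Two remarks sharpen the target. First, with binary outcomes one has $p(ab|xy)-p_{A|X}(a|x)p_{B|Y}(b|y)=\tfrac{(-1)^{a+b}}{4}\bigl(E_{xy}-E^{A}_{x}E^{B}_{y}\bigr)$, where $E^{A}_{x},E^{B}_{y},E_{xy}$ are the marginal and joint correlators, so factorization at a single entry is equivalent to the whole input pair $(x,y)$ being uncorrelated; the clause thus reduces to $E_{xy}\neq E^{A}_{x}E^{B}_{y}$ for all four input pairs. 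Second, your nondegeneracy observation (a deterministic setting renders the behaviour local) is correct but insufficient: a partially entangled two-qubit realization can satisfy $E_{x_0y_0}=E^{A}_{x_0}E^{B}_{y_0}$ at nondegenerate settings, so excluding this for extremal points genuinely requires the structure of the nonlocal extreme boundary of the CHSH quantum set (cf.\ \cite{Le23,Barizien24}), which is where \cite{Franz_2011} does the work you are leaning on. For the way \cref{lem:sec} is actually used in the paper --- only at the point of maximal CHSH violation, where $p(ab|xy)=\tfrac14\bigl(1\pm\tfrac{1}{\sqrt{2}}\bigr)\neq\tfrac14$ by your direct computation --- the gap is immaterial, and your verification covers the application.
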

    \noindent Based on this result, such distributions have the following decoupling property.
    \begin{lemma}
        Let $\mathsf{q} = \big( \mathcal{H}_{\tilde{Q}},\mathcal{H}_{E},\ket{\Psi},\{\{\tilde{M}_{a|x}\}_{a}\}_{x},\{\{\tilde{N}_{b|y}\}_{b}\}_{y}\big)$ be any commuting operator model that realizes a behaviour $p_{AB|XY}$ satisfying $p_{AB|XY}(ab|xy) \neq p_{A|X}(a|x)p_{B|Y}(b|y)$ for all $a,b,x,y$, and $p_{AB|XY}$ is extremal in $\mathcal{Q}$. Let $\rho_{E}^{a,b|x,y} = \mathrm{Tr}_{\tilde{Q}}[(\tilde{M}_{a|x}\tilde{N}_{b|y} \otimes \id) \ketbra{\Psi}{\Psi}]$ be the post-measurement state of Eve associated to $\mathsf{q}$ as defined above. Then for all $a,b,x,y$,
        \begin{equation}
            \rho_{E}^{a,b|x,y} = p_{AB|XY}(ab|xy)\, \rho_{E}, \label{eq:dec1_a}
        \end{equation}
        where $\rho_{E} = \mathrm{Tr}_{\tilde{Q}}[\ketbra{\Psi}{\Psi}]$. \label{lem:dec_new}
    \end{lemma}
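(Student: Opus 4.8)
The plan is to deduce the claim from the security property in \cref{def:secure} by testing both sides against an arbitrary operator $F$ on $\mathcal{H}_E$. First I would verify that $p_{AB|XY}$ is secure so that \cref{lem:sec} applies. By hypothesis it is extremal in $\mathcal{Q}$, and since $p_{AB|XY}(ab|xy)\neq p_{A|X}(a|x)p_{B|Y}(b|y)$ for all $a,b,x,y$ it is not a product distribution. If an extreme point of $\mathcal{Q}$ were local, it would be an extreme point of the local polytope $\mathcal{L}\subseteq\mathcal{Q}$, hence a deterministic strategy and therefore a product distribution; as our behaviour is non-product it must be nonlocal. Being nonlocal and extremal, \cref{lem:sec} certifies that it is secure, so \eqref{eq:secDef} holds for the given commuting operator model and every operator $F$ on $\mathcal{H}_E$.

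Next I would pair the operator $\rho_E^{a,b|x,y}$ with an arbitrary $F$. Using the partial-trace identity $\mathrm{Tr}_E\!\big[(\mathrm{Tr}_{\tilde{Q}}X)\,F\big]=\mathrm{Tr}\big[X(\id_{\tilde{Q}}\otimes F)\big]$ with $X=(\tilde{M}_{a|x}\tilde{N}_{b|y}\otimes\id_E)\ketbra{\Psi}{\Psi}$, together with cyclicity of the trace and the fact that the honest-party operators on $\mathcal{H}_{\tilde{Q}}$ commute with $F$ on $\mathcal{H}_E$, I would obtain
\begin{equation}
\mathrm{Tr}_E\!\big[\rho_E^{a,b|x,y}F\big]=\bra{\Psi}(\tilde{M}_{a|x}\tilde{N}_{b|y}\otimes\id_E)(\id_{\tilde{Q}}\otimes F)\ket{\Psi}=\bra{\Psi}\tilde{M}_{a|x}\tilde{N}_{b|y}\otimes F\ket{\Psi}.
\end{equation}
Applying \eqref{eq:secDef} to the right-hand side gives $\bra{\Psi}\id\otimes F\ket{\Psi}\,p_{AB|XY}(ab|xy)=p_{AB|XY}(ab|xy)\,\mathrm{Tr}_E[\rho_E F]$, where I used $\bra{\Psi}\id\otimes F\ket{\Psi}=\mathrm{Tr}_E[\rho_E F]$.

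Finally, since $\mathrm{Tr}_E\!\big[\rho_E^{a,b|x,y}F\big]=\mathrm{Tr}_E\!\big[p_{AB|XY}(ab|xy)\,\rho_E\,F\big]$ holds for every operator $F$ on $\mathcal{H}_E$ (it suffices to take $F=\ketbra{i}{j}_E$ to match all matrix elements), non-degeneracy of the trace pairing forces $\rho_E^{a,b|x,y}=p_{AB|XY}(ab|xy)\,\rho_E$, as claimed. I expect the only genuinely delicate point to be the first paragraph: carefully arguing that extremality together with the non-product assumption yields nonlocality, so that \cref{lem:sec} is applicable. The remaining steps are routine bookkeeping of partial traces and cyclicity, with only mild care needed if $\mathcal{H}_E$ is infinite dimensional, which is handled by testing against rank-one $F$.
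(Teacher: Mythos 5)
Your proposal is correct and follows essentially the same route as the paper's proof: invoke \cref{lem:sec} to get the security property \eqref{eq:secDef}, rewrite both sides as traces against an arbitrary $F$ on $\mathcal{H}_E$, and conclude equality of the operators from non-degeneracy of the trace pairing. The only difference is that you explicitly justify the nonlocality of $p_{AB|XY}$ (extremal local points are deterministic, hence product), a step the paper's proof asserts without argument, so your write-up is if anything slightly more complete.
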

    \begin{proof}
    We begin by applying \cref{lem:sec}. Since $p_{AB|XY}$ is nonlocal and extremal, it is secure. That is, \cref{eq:secDef} holds. By direct calculation
    \begin{equation}
        \begin{aligned}
             &\bra{\Psi}\tilde{M}_{a|x} \tilde{N}_{b|y} \otimes F\ket{\Psi} = \bra{\Psi}\id \otimes F\ket{\Psi} \, p_{AB|XY}(ab|xy)\\ 
             \iff& \mathrm{Tr}_{E}\Big[\mathrm{Tr}_{\tilde{Q}}\big[(\tilde{M}_{a|x} \tilde{N}_{b|y} \otimes F)\ketbra{\Psi}{\Psi}\big]\Big] = \mathrm{Tr}_{E}\Big[\mathrm{Tr}_{\tilde{Q}}\big[(\id_{\tilde{Q}} \otimes F)\ketbra{\Psi}{\Psi}\big]\Big] \, p_{AB|XY}(ab|xy)\\
             \iff& \mathrm{Tr}_{E}\Big[F \, \mathrm{Tr}_{\tilde{Q}}\big[(\tilde{M}_{a|x} \tilde{N}_{b|y} \otimes \id_{E})\ketbra{\Psi}{\Psi}\big]\Big] = \mathrm{Tr}_{E}\Big[F \, \mathrm{Tr}_{\tilde{Q}}\big[\ketbra{\Psi}{\Psi}\big]\Big]\, p_{AB|XY}(ab|xy)\\
             \iff& \mathrm{Tr}\Big[F\rho_{E}^{a,b|x,y}\Big] = \mathrm{Tr}\Big[F(p_{AB|XY}(ab|xy) \, \rho_{E})\Big].
             \end{aligned}
             \end{equation}
             But this holds for all $F$ if and only if $\rho_{E}^{a,b|x,y} = p_{AB|XY}(ab|xy)\, \rho_{E}$, which completes the proof.
    \end{proof}

    Consider the post-measurement state between Alice and Eve, conditioned on the input choice $X=0$:
    \begin{equation}
        \rho_{ATE|X=0} = \sum_{a,t \in \{0,1\}} \ketbra{a}{a}_{A} 
        \otimes  \ketbra{t}{t}_{T} \otimes \mathrm{Tr}_{\tilde{Q}_{A}\tilde{Q}_{B}\tilde{Q}_{C}}\Big[ \tilde{M}_{(a,t)|0}\ketbra{\Psi}{\Psi}\Big].
    \end{equation}
    Above, $A$ and $T$ denote the classical registers used to store Alice's outcomes $a$ and $t$ respectively. Conditioning on $T=t$, we consider the state 
    \begin{equation}
    \begin{aligned}
        \rho_{AE|X=0,T=t} &:= \frac{1}{p_{T}(t)}\sum_{a \in \{0,1\}} \ketbra{a}{a}_{A}  
        \otimes  \mathrm{Tr}_{\tilde{Q}_{A}\tilde{Q}_{B}\tilde{Q}_{C}}\Big[ \tilde{M}_{(a,t)|0}\ketbra{\Psi}{\Psi}\Big] \\
        &= \sum_{a \in \{0,1\}} \ketbra{a}{a}_{A}  
        \otimes  \mathrm{Tr}_{\tilde{Q}_{A}\tilde{Q}_{B}\tilde{Q}_{C}}\Big[ \tilde{M}_{(a,t)|0}\ketbra{\Psi_{t}}{\Psi_{t}}\Big], \label{eq:pms_QKD2}
    \end{aligned}
    \end{equation}
    where we used the fact that $\tilde{M}_{(a,t)|0}\ket{\Psi} = \tilde{\Pi}_{t|0}^{A}\tilde{M}_{(a,t)|0}\tilde{\Pi}_{t|0}^{A}\ket{\Psi}$ and $\ket{\Psi}$ satisfies \cref{lem:proj}. Recall that the tuple $(\mathcal{H}_{0},\mathcal{H}_{E},\ket{\Psi_{0}},\{\tilde{M}_{(a,0)|x}\}_{a},\{\tilde{N}_{(b,0)|y}\}_{b})$ realizes an extremal distribution $p_{AB|T=0}$. According to \cref{lem:dec_new}, we therefore have
    \begin{equation}
        \mathrm{Tr}_{\tilde{Q}}\Big[ (\tilde{M}_{(a,0)|x}\tilde{N}_{(b,0)|y} \otimes \id_{E})\ketbra{\Psi_{0}}{\Psi_{0}}\Big] = \hat{p}(ab|xy) \, \mathrm{Tr}_{\tilde{Q}}\Big[\ketbra{\Psi_{0}}{\Psi_{0}}\Big],
    \end{equation}
    where $\hat{p}(ab|xy)$ is the unique quantum behaviour maximally violating the CHSH inequality. Similarly for $p_{AC|T=1}$,
    \begin{equation}
        \mathrm{Tr}_{\tilde{Q}}\Big[ (\tilde{M}_{(a,1)|x}\tilde{O}_{(c,1)|z} \otimes \id_{E})\ketbra{\Psi_{1}}{\Psi_{1}}\Big] = \hat{p}(ac|xz) \, \mathrm{Tr}_{\tilde{Q}}\Big[\ketbra{\Psi_{1}}{\Psi_{1}}\Big].
    \end{equation}
    We therefore find
    \begin{equation}
        \rho_{AE|X=0,T=t} = \rho_{A|X=0,T=t} \otimes \sigma_{E|T=t}
    \end{equation}
    for some state $\sigma_{E|T=t}$, where 
    \begin{equation}
    \begin{aligned}
        \rho_{A|X=0,T=t} &= \sum_{a \in \{0,1\}} \hat{p}(a|0)\, \ketbra{a}{a}_{A} = \id_{A}/2.
    \end{aligned}
    \end{equation}
    Above, we used the fact that $\hat{p}(a|x) = \sum_{b}\hat{p}(ab|xy) = 1/2$. Putting everything together, we find
    \begin{equation}
        H(A|X=0,T=t,E)_{\rho_{AE|X=0,T=t}}= H(A|X=0,T=t)_{\rho_{A|X=0,T=t}} = 1, 
    \end{equation}
    completing the proof of \ifarxiv\cref{prop:ent}\else Proposition~1\fi.
\end{proof}

\section{Robustness} \label{sec:rob}
The robustness of our construction to noise is presented in \cref{fig:robustness}. We applied the numerical technique of Ref.~\cite{Brown2024deviceindependent}, which provides lower bounds on the optimization~\eqref{eq:p1rate} in the case of a non-maximal Bell violation, i.e., $\langle I \rangle \geq s$ for $s \in [2,2\sqrt{2}]$. For comparison, we lower bound~\eqref{eq:p1rate} by the min-entropy (this can be achieved using the NPA hierarchy~\cite{NPA1,NPA2}), and the results show the improvement offered by bounding the von Neumann entropy directly (cf.\ the solid versus dashed curves in \cref{fig:robustness}). In addition, we consider the case where, rather than just constraining the Bell value $\langle I \rangle = \langle I_{\text{CHSH}}^{AB,T=0} + I_{\text{CHSH}}^{AC,T=1} \rangle \geq s$, we constrain both terms independently, weighted according to $p_{T}(t)$. That is, we include the constraints $\langle I_{\text{CHSH}}^{AB,T=0} \rangle \geq p_{T}(0)s$ and $\langle I_{\text{CHSH}}^{AC,T=1} \rangle \geq p_{T}(1)s$. The results are also displayed in \cref{fig:robustness}.

For all our numerical calculations, we consider the case where the only effect of the noise is that it decreases the observed Bell violation, i.e., there remains no noise in the flag register $T$. Specifically, we include the constraints in \cref{eq:projConstr1} in the numerical optimization, which guarantee that all parties observe the same flag each round. As argued below \cref{eq:projConstr1}, this noise model is justified since the flag corresponds to a classical signal in the honest implementation.    

\begin{figure}[h]
\includegraphics[width=9cm]{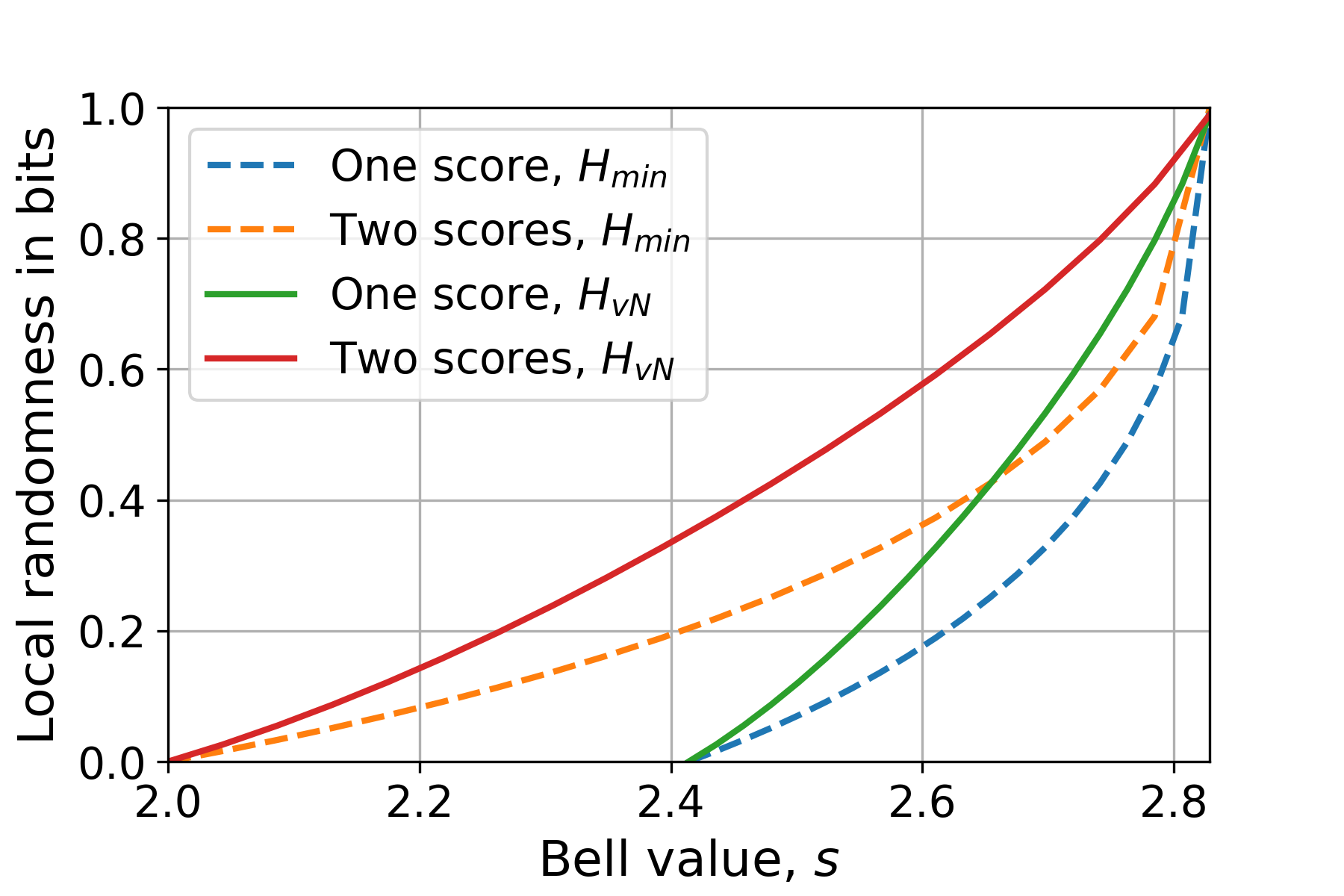}
\centering
\caption{Robustness of our construction to noise. We consider the local randomness of Alice's outcome $A$ when $T=0$, and set observed distribution of the flag to $p_{T}(0)= 1/2$. One score corresponds to the constraint $\langle I \rangle \geq s$, while two scores corresponds to the constraints $\langle I_{\text{CHSH}}^{AB,T=0} \rangle \geq p_{T}(0)s$ and $\langle I_{\text{CHSH}}^{AC,T=1} \rangle \geq p_{T}(1)s$. We also include all constraints in \cref{eq:projConstr1}. $H_{\text{min}}$ corresponds to lower bounds on the min-entropy obtained via the NPA hierarchy~\cite{NPA1,NPA2}, and $H_{\text{vN}}$ corresponds to lower bounds on the von Neumann entropy obtained using the technique of Ref.~\cite{Brown2024deviceindependent}. The data used in this figure is available at~\cite{dataset}.}
\label{fig:robustness}
\end{figure}

\section{Comparisons and adaptations} \label{sec:alt}

\subsection{Equivalent description without source randomness} \label{sec:alt_a}
We could equivalently describe the honest implementation of our protocol without the probabilistic selection of who shares entanglement with Alice. Let each party hold two qubits, $Q_{A_{1}}Q_{A_{2}}, Q_{B_{1}}Q_{B_{2}}, Q_{C_{1}}Q_{C_{2}}$. The honest state is given by
\begin{equation}
    \tilde{\rho} = \rho_{Q_{A_{1}}Q_{B_{1}}} \otimes \ketbra{+}{+}_{Q_{C_{1}}} \otimes \rho_{Q_{A_{2}}Q_{C_{2}}} \otimes \ketbra{+}{+}_{Q_{B_{2}}},
\end{equation}
where $\rho = \Phi_{0}$. The ideal observables are (interpreted as 4-outcome measurements)
\begin{equation}
    \begin{aligned}
        A_{0} &= \sigma_{Z}\otimes \sigma_{Z}, A_{1} = \sigma_{X} \otimes \sigma_{X}, \\
        B_{0/1} &= C_{0/1} = \Big[(\sigma_{Z} \pm \sigma_{X})/\sqrt{2} \Big]\otimes \Big[ (\sigma_{Z} \pm \sigma_{X})/\sqrt{2} \Big], \\
        B_{2} &= C_{2} = \sigma_{Z}\otimes \sigma_{Z}.
    \end{aligned}
\end{equation}
On generation rounds, Alice's outcome $a_{1}$ is perfectly correlated with $b_{1}$, and her second outcome $a_{2}$ is perfectly correlated with $c_{2}$. Moreover, the honest strategy maximally violates the Bell expression
\begin{equation}
    \langle I \rangle:= \langle \tilde{I}^{(1)}_{\mathrm{CHSH}} \rangle + \langle \tilde{I}^{(2)}_{\mathrm{CHSH}} \rangle, 
\end{equation}
where 
\begin{equation}
    \langle \tilde{I}^{(1)}_{\mathrm{CHSH}}\rangle  = \sum_{xya_{1}b_{1}}(-1)^{a_{1}+b_{1} + xy} p(a_{1}b_{1}|xy),
\end{equation}
and $p(a_{1}b_{1}|xy) = \sum_{a_{2}b_{2}c_{1}c_{2}}p(a_{1}a_{2}b_{1}b_{2}c_{1}c_{2}|xyz)$ is the marginal for Alice and Bob's first system. $\langle \tilde{I}^{(2)}_{\mathrm{CHSH}} \rangle $ is defined similarly between $A_{2}$ and $C_{2}$. We expect that the proof technique in \cref{app:proof} can be modified to show security of this protocol too, making this implementation a suitable candidate in practice, rather than the protocol in the main text which requires source randomness.  

This formulation also highlights the limitations of the ``per entangled state'' key rate measure. Every round of the protocol, an entangled state is distributed, and in the honest implementation one conference key bit is established. On the other hand, the protocol we first introduced also consumes one entangled state each round, but achieves a conference key rate of $1/2$, since only one party shares entanglement with Alice at a time. Yet, both protocols are identical in experimental cost, since they require two entangled pairs per conference key bit.

\subsection{Comparison to the device-dependent result of~\cite{Carrara21}} \label{sec:alt1}
Here we remark how our protocol differs from the device-dependent result of~\cite{Carrara21}. As briefly described in the introduction, in~\cite{Carrara21}, Alice generates an EPR pair, and randomly chooses who to send the other half to, whilst sending the other party a fixed un-entangled state. The parties then perform their measurements. For Bob, the probability he observes an error with respect to Alice's key, denoted as the error rate $Q_{B}$, is the probability he did not share entanglement and he guessed the wrong bit, which in the symmetric case leads to $Q_{B} = 1/4$. Similarly $Q_{C} = 1/4$. Bob and Carole then both try to correct their entire string to Alice's raw string, and the asymptotic rate is $1 - h_{2}(1/4) \approx 0.19$, where $h_{2}(\cdot)$ is the binary entropy. This is similar to the honest implementation of our protocol, except we include classical communication of the (untrusted) flag $T$, indicating who shared entanglement. Then, instead of Bob/Carole trying to correct rounds in which they shared no entanglement, they discard these, and distill a perfect key with Alice which is half the length of her total raw string. As a result, we achieve a rate of $1/2$. 

\subsection{Using other Bell expressions}

Another adaptation which could be made is to replace the CHSH test with another bipartite Bell inequality that certifies the presence of a maximally entangled state. The inequality tested between Alice and Bob could even differ from the one tested between Alice and Carole. For example, the three-parameter family of expressions presented in~\cite{Le23,Barizien24,WBC3} constitute all linear self-tests of the maximally entangled state with a single expression in the CHSH scenario, and provide additional degrees of freedom in the choice of measurements used in the Bell test. These Bell inequalities could immediately substitute the CHSH test in our protocol. It might also be possible to extend our results to higher dimensional systems. In~\cite{Farkas23}, the family of Bell expressions derived in~\cite{PereiraAlves22} were used for high dimensional DIQKD between two parties. Privacy follows from the quantum bound being achieved uniquely by the maximally entangled state in dimension $d$, along with a statement about Alice's key generating measurement~\cite{Farkas23}. These are the same ingredients used in the proof of \ifarxiv\cref{prop:ent}\else Proposition~1\fi, so it seems plausible that they could be used as the bipartite tests in our protocol, which could lead to high dimensional DICKA without GME.   


\begin{thebibliography}{58}%
\makeatletter
\providecommand \@ifxundefined [1]{%
 \@ifx{#1\undefined}
}%
\providecommand \@ifnum [1]{%
 \ifnum #1\expandafter \@firstoftwo
 \else \expandafter \@secondoftwo
 \fi
}%
\providecommand \@ifx [1]{%
 \ifx #1\expandafter \@firstoftwo
 \else \expandafter \@secondoftwo
 \fi
}%
\providecommand \natexlab [1]{#1}%
\providecommand \enquote  [1]{``#1''}%
\providecommand \bibnamefont  [1]{#1}%
\providecommand \bibfnamefont [1]{#1}%
\providecommand \citenamefont [1]{#1}%
\providecommand \href@noop [0]{\@secondoftwo}%
\providecommand \href [0]{\begingroup \@sanitize@url \@href}%
\providecommand \@href[1]{\@@startlink{#1}\@@href}%
\providecommand \@@href[1]{\endgroup#1\@@endlink}%
\providecommand \@sanitize@url [0]{\catcode `\\12\catcode `\$12\catcode
  `\&12\catcode `\#12\catcode `\^12\catcode `\_12\catcode `\%12\relax}%
\providecommand \@@startlink[1]{}%
\providecommand \@@endlink[0]{}%
\providecommand \url  [0]{\begingroup\@sanitize@url \@url }%
\providecommand \@url [1]{\endgroup\@href {#1}{\urlprefix }}%
\providecommand \urlprefix  [0]{URL }%
\providecommand \Eprint [0]{\href }%
\providecommand \doibase [0]{https://doi.org/}%
\providecommand \selectlanguage [0]{\@gobble}%
\providecommand \bibinfo  [0]{\@secondoftwo}%
\providecommand \bibfield  [0]{\@secondoftwo}%
\providecommand \translation [1]{[#1]}%
\providecommand \BibitemOpen [0]{}%
\providecommand \bibitemStop [0]{}%
\providecommand \bibitemNoStop [0]{.\EOS\space}%
\providecommand \EOS [0]{\spacefactor3000\relax}%
\providecommand \BibitemShut  [1]{\csname bibitem#1\endcsname}%
\let\auto@bib@innerbib\@empty
\bibitem [{\citenamefont {Cabello}(2000)}]{cabello00}%
  \BibitemOpen
  \bibfield  {author} {\bibinfo {author} {\bibfnamefont {A.}~\bibnamefont
  {Cabello}},\ }\href@noop {} {\bibinfo {title} {Multiparty key distribution
  and secret sharing based on entanglement swapping}} (\bibinfo {year}
  {2000}),\ \Eprint {https://arxiv.org/abs/quant-ph/0009025}
  {arXiv:quant-ph/0009025 [quant-ph]} \BibitemShut {NoStop}%
\bibitem [{\citenamefont {Chen}\ and\ \citenamefont {Lo}(2007)}]{Chen07}%
  \BibitemOpen
  \bibfield  {author} {\bibinfo {author} {\bibfnamefont {K.}~\bibnamefont
  {Chen}}\ and\ \bibinfo {author} {\bibfnamefont {H.}~\bibnamefont {Lo}},\
  }\bibfield  {title} {\bibinfo {title} {Multi-partite quantum cryptographic
  protocols with noisy {GHZ} states},\ }\href
  {https://doi.org/10.26421/QIC7.8-1} {\bibfield  {journal} {\bibinfo
  {journal} {Quantum Inf. Comput.}\ }\textbf {\bibinfo {volume} {7}},\ \bibinfo
  {pages} {689} (\bibinfo {year} {2007})}\BibitemShut {NoStop}%
\bibitem [{\citenamefont {Epping}\ \emph {et~al.}(2017)\citenamefont {Epping},
  \citenamefont {Kampermann}, \citenamefont {Macchiavello},\ and\ \citenamefont
  {Bruß}}]{Epping_2017}%
  \BibitemOpen
  \bibfield  {author} {\bibinfo {author} {\bibfnamefont {M.}~\bibnamefont
  {Epping}}, \bibinfo {author} {\bibfnamefont {H.}~\bibnamefont {Kampermann}},
  \bibinfo {author} {\bibfnamefont {C.}~\bibnamefont {Macchiavello}},\ and\
  \bibinfo {author} {\bibfnamefont {D.}~\bibnamefont {Bruß}},\ }\bibfield
  {title} {\bibinfo {title} {Multi-partite entanglement can speed up quantum
  key distribution in networks},\ }\href
  {https://doi.org/10.1088/1367-2630/aa8487} {\bibfield  {journal} {\bibinfo
  {journal} {New Journal of Physics}\ }\textbf {\bibinfo {volume} {19}},\
  \bibinfo {pages} {093012} (\bibinfo {year} {2017})}\BibitemShut {NoStop}%
\bibitem [{\citenamefont {Grasselli}\ \emph {et~al.}(2018)\citenamefont
  {Grasselli}, \citenamefont {Kampermann},\ and\ \citenamefont
  {Bruß}}]{Grasselli_2018}%
  \BibitemOpen
  \bibfield  {author} {\bibinfo {author} {\bibfnamefont {F.}~\bibnamefont
  {Grasselli}}, \bibinfo {author} {\bibfnamefont {H.}~\bibnamefont
  {Kampermann}},\ and\ \bibinfo {author} {\bibfnamefont {D.}~\bibnamefont
  {Bruß}},\ }\bibfield  {title} {\bibinfo {title} {Finite-key effects in
  multipartite quantum key distribution protocols},\ }\href
  {https://doi.org/10.1088/1367-2630/aaec34} {\bibfield  {journal} {\bibinfo
  {journal} {New Journal of Physics}\ }\textbf {\bibinfo {volume} {20}},\
  \bibinfo {pages} {113014} (\bibinfo {year} {2018})}\BibitemShut {NoStop}%
\bibitem [{\citenamefont {Grasselli}\ \emph {et~al.}(2019)\citenamefont
  {Grasselli}, \citenamefont {Kampermann},\ and\ \citenamefont
  {Bruß}}]{Grasselli_2019}%
  \BibitemOpen
  \bibfield  {author} {\bibinfo {author} {\bibfnamefont {F.}~\bibnamefont
  {Grasselli}}, \bibinfo {author} {\bibfnamefont {H.}~\bibnamefont
  {Kampermann}},\ and\ \bibinfo {author} {\bibfnamefont {D.}~\bibnamefont
  {Bruß}},\ }\bibfield  {title} {\bibinfo {title} {Conference key agreement
  with single-photon interference},\ }\href
  {https://doi.org/10.1088/1367-2630/ab573e} {\bibfield  {journal} {\bibinfo
  {journal} {New Journal of Physics}\ }\textbf {\bibinfo {volume} {21}},\
  \bibinfo {pages} {123002} (\bibinfo {year} {2019})}\BibitemShut {NoStop}%
\bibitem [{\citenamefont {Murta}\ \emph {et~al.}(2020)\citenamefont {Murta},
  \citenamefont {Grasselli}, \citenamefont {Kampermann},\ and\ \citenamefont
  {Bruß}}]{MurtaReview20}%
  \BibitemOpen
  \bibfield  {author} {\bibinfo {author} {\bibfnamefont {G.}~\bibnamefont
  {Murta}}, \bibinfo {author} {\bibfnamefont {F.}~\bibnamefont {Grasselli}},
  \bibinfo {author} {\bibfnamefont {H.}~\bibnamefont {Kampermann}},\ and\
  \bibinfo {author} {\bibfnamefont {D.}~\bibnamefont {Bruß}},\ }\bibfield
  {title} {\bibinfo {title} {Quantum conference key agreement: A review},\
  }\href {https://doi.org/https://doi.org/10.1002/qute.202000025} {\bibfield
  {journal} {\bibinfo  {journal} {Advanced Quantum Technologies}\ }\textbf
  {\bibinfo {volume} {3}},\ \bibinfo {pages} {2000025} (\bibinfo {year}
  {2020})}\BibitemShut {NoStop}%
\bibitem [{\citenamefont {Proietti}\ \emph {et~al.}(2021)\citenamefont
  {Proietti}, \citenamefont {Ho}, \citenamefont {Grasselli}, \citenamefont
  {Barrow}, \citenamefont {Malik},\ and\ \citenamefont
  {Fedrizzi}}]{Proietti_21}%
  \BibitemOpen
  \bibfield  {author} {\bibinfo {author} {\bibfnamefont {M.}~\bibnamefont
  {Proietti}}, \bibinfo {author} {\bibfnamefont {J.}~\bibnamefont {Ho}},
  \bibinfo {author} {\bibfnamefont {F.}~\bibnamefont {Grasselli}}, \bibinfo
  {author} {\bibfnamefont {P.}~\bibnamefont {Barrow}}, \bibinfo {author}
  {\bibfnamefont {M.}~\bibnamefont {Malik}},\ and\ \bibinfo {author}
  {\bibfnamefont {A.}~\bibnamefont {Fedrizzi}},\ }\bibfield  {title} {\bibinfo
  {title} {Experimental quantum conference key agreement},\ }\href
  {https://doi.org/10.1126/sciadv.abe0395} {\bibfield  {journal} {\bibinfo
  {journal} {Science Advances}\ }\textbf {\bibinfo {volume} {7}},\ \bibinfo
  {pages} {eabe0395} (\bibinfo {year} {2021})}\BibitemShut {NoStop}%
\bibitem [{\citenamefont {Carrara}\ \emph {et~al.}(2023)\citenamefont
  {Carrara}, \citenamefont {Murta},\ and\ \citenamefont
  {Grasselli}}]{Carrara23}%
  \BibitemOpen
  \bibfield  {author} {\bibinfo {author} {\bibfnamefont {G.}~\bibnamefont
  {Carrara}}, \bibinfo {author} {\bibfnamefont {G.}~\bibnamefont {Murta}},\
  and\ \bibinfo {author} {\bibfnamefont {F.}~\bibnamefont {Grasselli}},\
  }\bibfield  {title} {\bibinfo {title} {Overcoming fundamental bounds on
  quantum conference key agreement},\ }\href
  {https://doi.org/10.1103/PhysRevApplied.19.064017} {\bibfield  {journal}
  {\bibinfo  {journal} {Phys. Rev. Appl.}\ }\textbf {\bibinfo {volume} {19}},\
  \bibinfo {pages} {064017} (\bibinfo {year} {2023})}\BibitemShut {NoStop}%
\bibitem [{\citenamefont {Pickston}\ \emph {et~al.}(2023)\citenamefont
  {Pickston}, \citenamefont {Ho}, \citenamefont {Ulibarrena}, \citenamefont
  {Grasselli}, \citenamefont {Proietti}, \citenamefont {Morrison},
  \citenamefont {Barrow}, \citenamefont {Graffitti},\ and\ \citenamefont
  {Fedrizzi}}]{Pickston_2023}%
  \BibitemOpen
  \bibfield  {author} {\bibinfo {author} {\bibfnamefont {A.}~\bibnamefont
  {Pickston}}, \bibinfo {author} {\bibfnamefont {J.}~\bibnamefont {Ho}},
  \bibinfo {author} {\bibfnamefont {A.}~\bibnamefont {Ulibarrena}}, \bibinfo
  {author} {\bibfnamefont {F.}~\bibnamefont {Grasselli}}, \bibinfo {author}
  {\bibfnamefont {M.}~\bibnamefont {Proietti}}, \bibinfo {author}
  {\bibfnamefont {C.~L.}\ \bibnamefont {Morrison}}, \bibinfo {author}
  {\bibfnamefont {P.}~\bibnamefont {Barrow}}, \bibinfo {author} {\bibfnamefont
  {F.}~\bibnamefont {Graffitti}},\ and\ \bibinfo {author} {\bibfnamefont
  {A.}~\bibnamefont {Fedrizzi}},\ }\bibfield  {title} {\bibinfo {title}
  {Conference key agreement in a quantum network},\ }\href
  {https://doi.org/10.1038/s41534-023-00750-4} {\bibfield  {journal} {\bibinfo
  {journal} {npj Quantum Information}\ }\textbf {\bibinfo {volume} {9}},\
  \bibinfo {pages} {82} (\bibinfo {year} {2023})}\BibitemShut {NoStop}%
\bibitem [{\citenamefont {Ribeiro}\ \emph {et~al.}(2018)\citenamefont
  {Ribeiro}, \citenamefont {Murta},\ and\ \citenamefont {Wehner}}]{Ribeiro18}%
  \BibitemOpen
  \bibfield  {author} {\bibinfo {author} {\bibfnamefont {J.}~\bibnamefont
  {Ribeiro}}, \bibinfo {author} {\bibfnamefont {G.}~\bibnamefont {Murta}},\
  and\ \bibinfo {author} {\bibfnamefont {S.}~\bibnamefont {Wehner}},\
  }\bibfield  {title} {\bibinfo {title} {Fully device-independent conference
  key agreement},\ }\href {https://doi.org/10.1103/PhysRevA.97.022307}
  {\bibfield  {journal} {\bibinfo  {journal} {Phys. Rev. A}\ }\textbf {\bibinfo
  {volume} {97}},\ \bibinfo {pages} {022307} (\bibinfo {year}
  {2018})}\BibitemShut {NoStop}%
\bibitem [{\citenamefont {Holz}\ \emph {et~al.}(2019)\citenamefont {Holz},
  \citenamefont {Miller}, \citenamefont {Kampermann},\ and\ \citenamefont
  {Bru\ss{}}}]{Holz19}%
  \BibitemOpen
  \bibfield  {author} {\bibinfo {author} {\bibfnamefont {T.}~\bibnamefont
  {Holz}}, \bibinfo {author} {\bibfnamefont {D.}~\bibnamefont {Miller}},
  \bibinfo {author} {\bibfnamefont {H.}~\bibnamefont {Kampermann}},\ and\
  \bibinfo {author} {\bibfnamefont {D.}~\bibnamefont {Bru\ss{}}},\ }\bibfield
  {title} {\bibinfo {title} {Comment on ``fully device-independent conference
  key agreement''},\ }\href {https://doi.org/10.1103/PhysRevA.100.026301}
  {\bibfield  {journal} {\bibinfo  {journal} {Phys. Rev. A}\ }\textbf {\bibinfo
  {volume} {100}},\ \bibinfo {pages} {026301} (\bibinfo {year}
  {2019})}\BibitemShut {NoStop}%
\bibitem [{\citenamefont {Ribeiro}\ \emph {et~al.}(2019)\citenamefont
  {Ribeiro}, \citenamefont {Murta},\ and\ \citenamefont {Wehner}}]{Ribeiro19}%
  \BibitemOpen
  \bibfield  {author} {\bibinfo {author} {\bibfnamefont {J.}~\bibnamefont
  {Ribeiro}}, \bibinfo {author} {\bibfnamefont {G.}~\bibnamefont {Murta}},\
  and\ \bibinfo {author} {\bibfnamefont {S.}~\bibnamefont {Wehner}},\
  }\bibfield  {title} {\bibinfo {title} {Reply to ``comment on `fully
  device-independent conference key agreement' ''},\ }\href
  {https://doi.org/10.1103/PhysRevA.100.026302} {\bibfield  {journal} {\bibinfo
   {journal} {Phys. Rev. A}\ }\textbf {\bibinfo {volume} {100}},\ \bibinfo
  {pages} {026302} (\bibinfo {year} {2019})}\BibitemShut {NoStop}%
\bibitem [{\citenamefont {Holz}\ \emph {et~al.}(2020)\citenamefont {Holz},
  \citenamefont {Kampermann},\ and\ \citenamefont {Bru\ss{}}}]{Holz20}%
  \BibitemOpen
  \bibfield  {author} {\bibinfo {author} {\bibfnamefont {T.}~\bibnamefont
  {Holz}}, \bibinfo {author} {\bibfnamefont {H.}~\bibnamefont {Kampermann}},\
  and\ \bibinfo {author} {\bibfnamefont {D.}~\bibnamefont {Bru\ss{}}},\
  }\bibfield  {title} {\bibinfo {title} {Genuine multipartite {B}ell inequality
  for device-independent conference key agreement},\ }\href
  {https://doi.org/10.1103/PhysRevResearch.2.023251} {\bibfield  {journal}
  {\bibinfo  {journal} {Phys. Rev. Res.}\ }\textbf {\bibinfo {volume} {2}},\
  \bibinfo {pages} {023251} (\bibinfo {year} {2020})}\BibitemShut {NoStop}%
\bibitem [{\citenamefont {Einstein}\ \emph {et~al.}(1935)\citenamefont
  {Einstein}, \citenamefont {Podolsky},\ and\ \citenamefont {Rosen}}]{EPR}%
  \BibitemOpen
  \bibfield  {author} {\bibinfo {author} {\bibfnamefont {A.}~\bibnamefont
  {Einstein}}, \bibinfo {author} {\bibfnamefont {B.}~\bibnamefont {Podolsky}},\
  and\ \bibinfo {author} {\bibfnamefont {N.}~\bibnamefont {Rosen}},\ }\bibfield
   {title} {\bibinfo {title} {Can quantum-mechanical description of physical
  reality be considered complete?},\ }\href
  {https://doi.org/10.1103/PhysRev.47.777} {\bibfield  {journal} {\bibinfo
  {journal} {Physical Review}\ }\textbf {\bibinfo {volume} {47}},\ \bibinfo
  {pages} {777} (\bibinfo {year} {1935})}\BibitemShut {NoStop}%
\bibitem [{\citenamefont {Bell}(1987)}]{Bell_book}%
  \BibitemOpen
  \bibfield  {author} {\bibinfo {author} {\bibfnamefont {J.~S.}\ \bibnamefont
  {Bell}},\ }\href {https://doi.org/10.1017/CBO9780511815676} {\emph {\bibinfo
  {title} {Speakable and unspeakable in quantum mechanics}}}\ (\bibinfo
  {publisher} {Cambridge University Press},\ \bibinfo {year}
  {1987})\BibitemShut {NoStop}%
\bibitem [{\citenamefont {Barrett}\ \emph
  {et~al.}(2005{\natexlab{a}})\citenamefont {Barrett}, \citenamefont {Linden},
  \citenamefont {Massar}, \citenamefont {Pironio}, \citenamefont {Popescu},\
  and\ \citenamefont {Roberts}}]{BarrettNonlocalResource}%
  \BibitemOpen
  \bibfield  {author} {\bibinfo {author} {\bibfnamefont {J.}~\bibnamefont
  {Barrett}}, \bibinfo {author} {\bibfnamefont {N.}~\bibnamefont {Linden}},
  \bibinfo {author} {\bibfnamefont {S.}~\bibnamefont {Massar}}, \bibinfo
  {author} {\bibfnamefont {S.}~\bibnamefont {Pironio}}, \bibinfo {author}
  {\bibfnamefont {S.}~\bibnamefont {Popescu}},\ and\ \bibinfo {author}
  {\bibfnamefont {D.}~\bibnamefont {Roberts}},\ }\bibfield  {title} {\bibinfo
  {title} {Nonlocal correlations as an information-theoretic resource},\ }\href
  {https://doi.org/10.1103/PhysRevA.71.022101} {\bibfield  {journal} {\bibinfo
  {journal} {Physical Review A}\ }\textbf {\bibinfo {volume} {71}},\ \bibinfo
  {pages} {022101} (\bibinfo {year} {2005}{\natexlab{a}})}\BibitemShut
  {NoStop}%
\bibitem [{\citenamefont {Barrett}\ \emph
  {et~al.}(2005{\natexlab{b}})\citenamefont {Barrett}, \citenamefont {Hardy},\
  and\ \citenamefont {Kent}}]{BHK}%
  \BibitemOpen
  \bibfield  {author} {\bibinfo {author} {\bibfnamefont {J.}~\bibnamefont
  {Barrett}}, \bibinfo {author} {\bibfnamefont {L.}~\bibnamefont {Hardy}},\
  and\ \bibinfo {author} {\bibfnamefont {A.}~\bibnamefont {Kent}},\ }\bibfield
  {title} {\bibinfo {title} {No signalling and quantum key distribution},\
  }\href {https://doi.org/10.1103/PhysRevLett.95.010503} {\bibfield  {journal}
  {\bibinfo  {journal} {Physical Review Letters}\ }\textbf {\bibinfo {volume}
  {95}},\ \bibinfo {pages} {010503} (\bibinfo {year}
  {2005}{\natexlab{b}})}\BibitemShut {NoStop}%
\bibitem [{\citenamefont {Ac\'in}\ \emph {et~al.}(2007)\citenamefont {Ac\'in},
  \citenamefont {Brunner}, \citenamefont {Gisin}, \citenamefont {Massar},
  \citenamefont {Pironio},\ and\ \citenamefont {Scarani}}]{ABGMPS}%
  \BibitemOpen
  \bibfield  {author} {\bibinfo {author} {\bibfnamefont {A.}~\bibnamefont
  {Ac\'in}}, \bibinfo {author} {\bibfnamefont {N.}~\bibnamefont {Brunner}},
  \bibinfo {author} {\bibfnamefont {N.}~\bibnamefont {Gisin}}, \bibinfo
  {author} {\bibfnamefont {S.}~\bibnamefont {Massar}}, \bibinfo {author}
  {\bibfnamefont {S.}~\bibnamefont {Pironio}},\ and\ \bibinfo {author}
  {\bibfnamefont {V.}~\bibnamefont {Scarani}},\ }\bibfield  {title} {\bibinfo
  {title} {Device-independent security of quantum cryptography against
  collective attacks},\ }\href {https://doi.org/10.1103/PhysRevLett.98.230501}
  {\bibfield  {journal} {\bibinfo  {journal} {Physical Review Letters}\
  }\textbf {\bibinfo {volume} {98}},\ \bibinfo {pages} {230501} (\bibinfo
  {year} {2007})}\BibitemShut {NoStop}%
\bibitem [{\citenamefont {Pironio}\ \emph {et~al.}(2009)\citenamefont
  {Pironio}, \citenamefont {Acin}, \citenamefont {Brunner}, \citenamefont
  {Gisin}, \citenamefont {Massar},\ and\ \citenamefont {Scarani}}]{PABGMS}%
  \BibitemOpen
  \bibfield  {author} {\bibinfo {author} {\bibfnamefont {S.}~\bibnamefont
  {Pironio}}, \bibinfo {author} {\bibfnamefont {A.}~\bibnamefont {Acin}},
  \bibinfo {author} {\bibfnamefont {N.}~\bibnamefont {Brunner}}, \bibinfo
  {author} {\bibfnamefont {N.}~\bibnamefont {Gisin}}, \bibinfo {author}
  {\bibfnamefont {S.}~\bibnamefont {Massar}},\ and\ \bibinfo {author}
  {\bibfnamefont {V.}~\bibnamefont {Scarani}},\ }\bibfield  {title} {\bibinfo
  {title} {Device-independent quantum key distribution secure against
  collective attacks},\ }\href {https://doi.org/10.1088/1367-2630/11/4/045021}
  {\bibfield  {journal} {\bibinfo  {journal} {New Journal of Physics}\ }\textbf
  {\bibinfo {volume} {11}},\ \bibinfo {pages} {045021} (\bibinfo {year}
  {2009})}\BibitemShut {NoStop}%
\bibitem [{\citenamefont {Barrett}\ \emph {et~al.}(2012)\citenamefont
  {Barrett}, \citenamefont {Colbeck},\ and\ \citenamefont {Kent}}]{bcktwo}%
  \BibitemOpen
  \bibfield  {author} {\bibinfo {author} {\bibfnamefont {J.}~\bibnamefont
  {Barrett}}, \bibinfo {author} {\bibfnamefont {R.}~\bibnamefont {Colbeck}},\
  and\ \bibinfo {author} {\bibfnamefont {A.}~\bibnamefont {Kent}},\ }\bibfield
  {title} {\bibinfo {title} {Unconditionally secure device-independent quantum
  key distribution with only two devices},\ }\href
  {https://doi.org/10.1103/PhysRevA.86.062326} {\bibfield  {journal} {\bibinfo
  {journal} {Physical Review A}\ }\textbf {\bibinfo {volume} {86}},\ \bibinfo
  {pages} {062326} (\bibinfo {year} {2012})}\BibitemShut {NoStop}%
\bibitem [{\citenamefont {Vazirani}\ and\ \citenamefont {Vidick}(2014)}]{VV2}%
  \BibitemOpen
  \bibfield  {author} {\bibinfo {author} {\bibfnamefont {U.}~\bibnamefont
  {Vazirani}}\ and\ \bibinfo {author} {\bibfnamefont {T.}~\bibnamefont
  {Vidick}},\ }\bibfield  {title} {\bibinfo {title} {Fully device-independent
  quantum key distribution},\ }\href
  {https://doi.org/10.1103/PhysRevLett.113.140501} {\bibfield  {journal}
  {\bibinfo  {journal} {Physical Review Letters}\ }\textbf {\bibinfo {volume}
  {113}},\ \bibinfo {pages} {140501} (\bibinfo {year} {2014})}\BibitemShut
  {NoStop}%
\bibitem [{\citenamefont {Grasselli}\ \emph {et~al.}(2023)\citenamefont
  {Grasselli}, \citenamefont {Murta}, \citenamefont {Kampermann},\ and\
  \citenamefont {Bruß}}]{Grasselli_2023}%
  \BibitemOpen
  \bibfield  {author} {\bibinfo {author} {\bibfnamefont {F.}~\bibnamefont
  {Grasselli}}, \bibinfo {author} {\bibfnamefont {G.}~\bibnamefont {Murta}},
  \bibinfo {author} {\bibfnamefont {H.}~\bibnamefont {Kampermann}},\ and\
  \bibinfo {author} {\bibfnamefont {D.}~\bibnamefont {Bruß}},\ }\bibfield
  {title} {\bibinfo {title} {Boosting device-independent cryptography with
  tripartite nonlocality},\ }\href {https://doi.org/10.22331/q-2023-04-13-980}
  {\bibfield  {journal} {\bibinfo  {journal} {Quantum}\ }\textbf {\bibinfo
  {volume} {7}},\ \bibinfo {pages} {980} (\bibinfo {year} {2023})}\BibitemShut
  {NoStop}%
\bibitem [{\citenamefont {Grasselli}\ \emph {et~al.}(2021)\citenamefont
  {Grasselli}, \citenamefont {Murta}, \citenamefont {Kampermann},\ and\
  \citenamefont {Bru\ss{}}}]{GrasselliMulti21}%
  \BibitemOpen
  \bibfield  {author} {\bibinfo {author} {\bibfnamefont {F.}~\bibnamefont
  {Grasselli}}, \bibinfo {author} {\bibfnamefont {G.}~\bibnamefont {Murta}},
  \bibinfo {author} {\bibfnamefont {H.}~\bibnamefont {Kampermann}},\ and\
  \bibinfo {author} {\bibfnamefont {D.}~\bibnamefont {Bru\ss{}}},\ }\bibfield
  {title} {\bibinfo {title} {Entropy bounds for multiparty device-independent
  cryptography},\ }\href {https://doi.org/10.1103/PRXQuantum.2.010308}
  {\bibfield  {journal} {\bibinfo  {journal} {PRX Quantum}\ }\textbf {\bibinfo
  {volume} {2}},\ \bibinfo {pages} {010308} (\bibinfo {year}
  {2021})}\BibitemShut {NoStop}%
\bibitem [{\citenamefont {Woodhead}\ \emph {et~al.}(2018)\citenamefont
  {Woodhead}, \citenamefont {Bourdoncle},\ and\ \citenamefont
  {Acín}}]{Woodhead_2018}%
  \BibitemOpen
  \bibfield  {author} {\bibinfo {author} {\bibfnamefont {E.}~\bibnamefont
  {Woodhead}}, \bibinfo {author} {\bibfnamefont {B.}~\bibnamefont
  {Bourdoncle}},\ and\ \bibinfo {author} {\bibfnamefont {A.}~\bibnamefont
  {Acín}},\ }\bibfield  {title} {\bibinfo {title} {Randomness versus
  nonlocality in the {M}ermin-{B}ell experiment with three parties},\ }\href
  {https://doi.org/10.22331/q-2018-08-17-82} {\bibfield  {journal} {\bibinfo
  {journal} {Quantum}\ }\textbf {\bibinfo {volume} {2}},\ \bibinfo {pages} {82}
  (\bibinfo {year} {2018})}\BibitemShut {NoStop}%
\bibitem [{\citenamefont {Horodecki}\ \emph {et~al.}(2022)\citenamefont
  {Horodecki}, \citenamefont {Winczewski},\ and\ \citenamefont
  {Das}}]{Horodecki22}%
  \BibitemOpen
  \bibfield  {author} {\bibinfo {author} {\bibfnamefont {K.}~\bibnamefont
  {Horodecki}}, \bibinfo {author} {\bibfnamefont {M.}~\bibnamefont
  {Winczewski}},\ and\ \bibinfo {author} {\bibfnamefont {S.}~\bibnamefont
  {Das}},\ }\bibfield  {title} {\bibinfo {title} {Fundamental limitations on
  the device-independent quantum conference key agreement},\ }\href
  {https://doi.org/10.1103/PhysRevA.105.022604} {\bibfield  {journal} {\bibinfo
   {journal} {Phys. Rev. A}\ }\textbf {\bibinfo {volume} {105}},\ \bibinfo
  {pages} {022604} (\bibinfo {year} {2022})}\BibitemShut {NoStop}%
\bibitem [{\citenamefont {Philip}\ \emph {et~al.}(2023)\citenamefont {Philip},
  \citenamefont {Kaur}, \citenamefont {Bierhorst},\ and\ \citenamefont
  {Wilde}}]{Philip23}%
  \BibitemOpen
  \bibfield  {author} {\bibinfo {author} {\bibfnamefont {A.}~\bibnamefont
  {Philip}}, \bibinfo {author} {\bibfnamefont {E.}~\bibnamefont {Kaur}},
  \bibinfo {author} {\bibfnamefont {P.}~\bibnamefont {Bierhorst}},\ and\
  \bibinfo {author} {\bibfnamefont {M.~M.}\ \bibnamefont {Wilde}},\ }\bibfield
  {title} {\bibinfo {title} {Multipartite intrinsic non-locality and
  device-independent conference key agreement},\ }\href
  {https://doi.org/10.22331/q-2023-01-19-898} {\bibfield  {journal} {\bibinfo
  {journal} {{Quantum}}\ }\textbf {\bibinfo {volume} {7}},\ \bibinfo {pages}
  {898} (\bibinfo {year} {2023})}\BibitemShut {NoStop}%
\bibitem [{\citenamefont {Barrett}\ \emph {et~al.}(2013)\citenamefont
  {Barrett}, \citenamefont {Colbeck},\ and\ \citenamefont {Kent}}]{bckone}%
  \BibitemOpen
  \bibfield  {author} {\bibinfo {author} {\bibfnamefont {J.}~\bibnamefont
  {Barrett}}, \bibinfo {author} {\bibfnamefont {R.}~\bibnamefont {Colbeck}},\
  and\ \bibinfo {author} {\bibfnamefont {A.}~\bibnamefont {Kent}},\ }\bibfield
  {title} {\bibinfo {title} {Memory attacks on device-independent quantum
  cryptography},\ }\href {https://doi.org/10.1103/PhysRevLett.110.010503}
  {\bibfield  {journal} {\bibinfo  {journal} {Physical Review Letters}\
  }\textbf {\bibinfo {volume} {106}},\ \bibinfo {pages} {010503} (\bibinfo
  {year} {2013})}\BibitemShut {NoStop}%
\bibitem [{\citenamefont {Carrara}\ \emph {et~al.}(2021)\citenamefont
  {Carrara}, \citenamefont {Kampermann}, \citenamefont {Bru\ss{}},\ and\
  \citenamefont {Murta}}]{Carrara21}%
  \BibitemOpen
  \bibfield  {author} {\bibinfo {author} {\bibfnamefont {G.}~\bibnamefont
  {Carrara}}, \bibinfo {author} {\bibfnamefont {H.}~\bibnamefont {Kampermann}},
  \bibinfo {author} {\bibfnamefont {D.}~\bibnamefont {Bru\ss{}}},\ and\
  \bibinfo {author} {\bibfnamefont {G.}~\bibnamefont {Murta}},\ }\bibfield
  {title} {\bibinfo {title} {Genuine multipartite entanglement is not a
  precondition for secure conference key agreement},\ }\href
  {https://doi.org/10.1103/PhysRevResearch.3.013264} {\bibfield  {journal}
  {\bibinfo  {journal} {Phys. Rev. Res.}\ }\textbf {\bibinfo {volume} {3}},\
  \bibinfo {pages} {013264} (\bibinfo {year} {2021})}\BibitemShut {NoStop}%
\bibitem [{\citenamefont {{Clauser}}\ \emph {et~al.}(1969)\citenamefont
  {{Clauser}}, \citenamefont {{Horne}}, \citenamefont {{Shimony}},\ and\
  \citenamefont {{Holt}}}]{CHSH}%
  \BibitemOpen
  \bibfield  {author} {\bibinfo {author} {\bibfnamefont {J.~F.}\ \bibnamefont
  {{Clauser}}}, \bibinfo {author} {\bibfnamefont {M.~A.}\ \bibnamefont
  {{Horne}}}, \bibinfo {author} {\bibfnamefont {A.}~\bibnamefont {{Shimony}}},\
  and\ \bibinfo {author} {\bibfnamefont {R.~A.}\ \bibnamefont {{Holt}}},\
  }\bibfield  {title} {\bibinfo {title} {Proposed experiment to test local
  hidden-variable theories},\ }\href
  {https://doi.org/10.1103/PhysRevLett.23.880} {\bibfield  {journal} {\bibinfo
  {journal} {Physical Review Letters}\ }\textbf {\bibinfo {volume} {23}},\
  \bibinfo {pages} {880} (\bibinfo {year} {1969})}\BibitemShut {NoStop}%
\bibitem [{\citenamefont {Dupuis}\ \emph {et~al.}(2020)\citenamefont {Dupuis},
  \citenamefont {Fawzi},\ and\ \citenamefont {Renner}}]{DFR}%
  \BibitemOpen
  \bibfield  {author} {\bibinfo {author} {\bibfnamefont {F.}~\bibnamefont
  {Dupuis}}, \bibinfo {author} {\bibfnamefont {O.}~\bibnamefont {Fawzi}},\ and\
  \bibinfo {author} {\bibfnamefont {R.}~\bibnamefont {Renner}},\ }\bibfield
  {title} {\bibinfo {title} {Entropy accumulation},\ }\href
  {https://doi.org/10.1007/s00220-020-03839-5} {\bibfield  {journal} {\bibinfo
  {journal} {Communication in Mathematical Physics}\ }\textbf {\bibinfo
  {volume} {379}},\ \bibinfo {pages} {867} (\bibinfo {year}
  {2020})}\BibitemShut {NoStop}%
\bibitem [{\citenamefont {Arnon-Friedman}\ \emph {et~al.}(2018)\citenamefont
  {Arnon-Friedman}, \citenamefont {Dupuis}, \citenamefont {Fawzi},
  \citenamefont {Renner},\ and\ \citenamefont {Vidick}}]{ADFRV}%
  \BibitemOpen
  \bibfield  {author} {\bibinfo {author} {\bibfnamefont {R.}~\bibnamefont
  {Arnon-Friedman}}, \bibinfo {author} {\bibfnamefont {F.}~\bibnamefont
  {Dupuis}}, \bibinfo {author} {\bibfnamefont {O.}~\bibnamefont {Fawzi}},
  \bibinfo {author} {\bibfnamefont {R.}~\bibnamefont {Renner}},\ and\ \bibinfo
  {author} {\bibfnamefont {T.}~\bibnamefont {Vidick}},\ }\bibfield  {title}
  {\bibinfo {title} {Practical device-independent quantum cryptography via
  entropy accumulation},\ }\href {https://doi.org/10.1038/s41467-017-02307-4}
  {\bibfield  {journal} {\bibinfo  {journal} {Nature communications}\ }\textbf
  {\bibinfo {volume} {9}},\ \bibinfo {pages} {459} (\bibinfo {year}
  {2018})}\BibitemShut {NoStop}%
\bibitem [{\citenamefont {Plenio}\ and\ \citenamefont
  {Virmani}(2007)}]{Plenio07}%
  \BibitemOpen
  \bibfield  {author} {\bibinfo {author} {\bibfnamefont {M.~B.}\ \bibnamefont
  {Plenio}}\ and\ \bibinfo {author} {\bibfnamefont {S.~S.}\ \bibnamefont
  {Virmani}},\ }\bibfield  {title} {\bibinfo {title} {{An Introduction to
  Entanglement Theory}},\ }\href {https://doi.org/10.1007/978-3-319-04063-9_8}
  {\bibfield  {journal} {\bibinfo  {journal} {Quant. Inf. Comput.}\ }\textbf
  {\bibinfo {volume} {7}},\ \bibinfo {pages} {001} (\bibinfo {year}
  {2007})}\BibitemShut {NoStop}%
\bibitem [{\citenamefont {Arnon-Friedman}\ and\ \citenamefont
  {Bancal}(2019)}]{Arnon_Friedman_2019}%
  \BibitemOpen
  \bibfield  {author} {\bibinfo {author} {\bibfnamefont {R.}~\bibnamefont
  {Arnon-Friedman}}\ and\ \bibinfo {author} {\bibfnamefont {J.-D.}\
  \bibnamefont {Bancal}},\ }\bibfield  {title} {\bibinfo {title}
  {Device-independent certification of one-shot distillable entanglement},\
  }\href {https://doi.org/10.1088/1367-2630/aafef6} {\bibfield  {journal}
  {\bibinfo  {journal} {New Journal of Physics}\ }\textbf {\bibinfo {volume}
  {21}},\ \bibinfo {pages} {033010} (\bibinfo {year} {2019})}\BibitemShut
  {NoStop}%
\bibitem [{\citenamefont {Philip}\ and\ \citenamefont
  {Wilde}(2025)}]{Philip25}%
  \BibitemOpen
  \bibfield  {author} {\bibinfo {author} {\bibfnamefont {A.}~\bibnamefont
  {Philip}}\ and\ \bibinfo {author} {\bibfnamefont {M.~M.}\ \bibnamefont
  {Wilde}},\ }\bibfield  {title} {\bibinfo {title} {Device-independent
  certification of multipartite distillable entanglement},\ }\href
  {https://doi.org/10.1103/PhysRevA.111.012436} {\bibfield  {journal} {\bibinfo
   {journal} {Phys. Rev. A}\ }\textbf {\bibinfo {volume} {111}},\ \bibinfo
  {pages} {012436} (\bibinfo {year} {2025})}\BibitemShut {NoStop}%
\bibitem [{\citenamefont {Hayashi}\ and\ \citenamefont
  {Tsurumaru}(2016)}]{Hayashi_2016}%
  \BibitemOpen
  \bibfield  {author} {\bibinfo {author} {\bibfnamefont {M.}~\bibnamefont
  {Hayashi}}\ and\ \bibinfo {author} {\bibfnamefont {T.}~\bibnamefont
  {Tsurumaru}},\ }\bibfield  {title} {\bibinfo {title} {More efficient privacy
  amplification with less random seeds via dual universal hash function},\
  }\href {https://doi.org/10.1109/tit.2016.2526018} {\bibfield  {journal}
  {\bibinfo  {journal} {IEEE Transactions on Information Theory}\ }\textbf
  {\bibinfo {volume} {62}},\ \bibinfo {pages} {2213–2232} (\bibinfo {year}
  {2016})}\BibitemShut {NoStop}%
\bibitem [{\citenamefont {Cao}\ \emph {et~al.}(2021{\natexlab{a}})\citenamefont
  {Cao}, \citenamefont {Lu}, \citenamefont {Li}, \citenamefont {Gu},
  \citenamefont {Yin},\ and\ \citenamefont {Chen}}]{Cao21}%
  \BibitemOpen
  \bibfield  {author} {\bibinfo {author} {\bibfnamefont {X.-Y.}\ \bibnamefont
  {Cao}}, \bibinfo {author} {\bibfnamefont {Y.-S.}\ \bibnamefont {Lu}},
  \bibinfo {author} {\bibfnamefont {Z.}~\bibnamefont {Li}}, \bibinfo {author}
  {\bibfnamefont {J.}~\bibnamefont {Gu}}, \bibinfo {author} {\bibfnamefont
  {H.-L.}\ \bibnamefont {Yin}},\ and\ \bibinfo {author} {\bibfnamefont {Z.-B.}\
  \bibnamefont {Chen}},\ }\bibfield  {title} {\bibinfo {title} {High key rate
  quantum conference key agreement with unconditional security},\ }\href
  {https://doi.org/10.1109/ACCESS.2021.3113939} {\bibfield  {journal} {\bibinfo
   {journal} {IEEE Access}\ }\textbf {\bibinfo {volume} {9}},\ \bibinfo {pages}
  {128870} (\bibinfo {year} {2021}{\natexlab{a}})}\BibitemShut {NoStop}%
\bibitem [{\citenamefont {Cao}\ \emph {et~al.}(2021{\natexlab{b}})\citenamefont
  {Cao}, \citenamefont {Gu}, \citenamefont {Lu}, \citenamefont {Yin},\ and\
  \citenamefont {Chen}}]{Cao_2021b}%
  \BibitemOpen
  \bibfield  {author} {\bibinfo {author} {\bibfnamefont {X.-Y.}\ \bibnamefont
  {Cao}}, \bibinfo {author} {\bibfnamefont {J.}~\bibnamefont {Gu}}, \bibinfo
  {author} {\bibfnamefont {Y.-S.}\ \bibnamefont {Lu}}, \bibinfo {author}
  {\bibfnamefont {H.-L.}\ \bibnamefont {Yin}},\ and\ \bibinfo {author}
  {\bibfnamefont {Z.-B.}\ \bibnamefont {Chen}},\ }\bibfield  {title} {\bibinfo
  {title} {Coherent one-way quantum conference key agreement based on twin
  field},\ }\href {https://doi.org/10.1088/1367-2630/abef98} {\bibfield
  {journal} {\bibinfo  {journal} {New Journal of Physics}\ }\textbf {\bibinfo
  {volume} {23}},\ \bibinfo {pages} {043002} (\bibinfo {year}
  {2021}{\natexlab{b}})}\BibitemShut {NoStop}%
\bibitem [{\citenamefont {Wooltorton}\ \emph {et~al.}(2023)\citenamefont
  {Wooltorton}, \citenamefont {Brown},\ and\ \citenamefont {Colbeck}}]{WBC2}%
  \BibitemOpen
  \bibfield  {author} {\bibinfo {author} {\bibfnamefont {L.}~\bibnamefont
  {Wooltorton}}, \bibinfo {author} {\bibfnamefont {P.}~\bibnamefont {Brown}},\
  and\ \bibinfo {author} {\bibfnamefont {R.}~\bibnamefont {Colbeck}},\ }\href
  {https://arxiv.org/abs/2308.07030} {\bibinfo {title} {Expanding bipartite
  {B}ell inequalities for maximum multi-partite randomness}} (\bibinfo {year}
  {2023}),\ \Eprint {https://arxiv.org/abs/2308.07030} {arXiv:2308.07030
  [quant-ph]} \BibitemShut {NoStop}%
\bibitem [{\citenamefont {Aspect}\ \emph {et~al.}(1982)\citenamefont {Aspect},
  \citenamefont {Dalibard},\ and\ \citenamefont {Roger}}]{Aspect82}%
  \BibitemOpen
  \bibfield  {author} {\bibinfo {author} {\bibfnamefont {A.}~\bibnamefont
  {Aspect}}, \bibinfo {author} {\bibfnamefont {J.}~\bibnamefont {Dalibard}},\
  and\ \bibinfo {author} {\bibfnamefont {G.}~\bibnamefont {Roger}},\ }\bibfield
   {title} {\bibinfo {title} {Experimental test of {B}ell's inequalities using
  time-varying analyzers},\ }\href
  {https://doi.org/10.1103/PhysRevLett.49.1804} {\bibfield  {journal} {\bibinfo
   {journal} {Phys. Rev. Lett.}\ }\textbf {\bibinfo {volume} {49}},\ \bibinfo
  {pages} {1804} (\bibinfo {year} {1982})}\BibitemShut {NoStop}%
\bibitem [{\citenamefont {Hensen}\ \emph {et~al.}(2015)\citenamefont {Hensen},
  \citenamefont {Bernien}, \citenamefont {Dr{\'e}au}, \citenamefont {Reiserer},
  \citenamefont {Kalb}, \citenamefont {Blok}, \citenamefont {Ruitenberg},
  \citenamefont {Vermeulen}, \citenamefont {Schouten}, \citenamefont
  {Abell{\'a}n}, \citenamefont {Amaya}, \citenamefont {Pruneri}, \citenamefont
  {Mitchell}, \citenamefont {Markham}, \citenamefont {Twitchen}, \citenamefont
  {Elkouss}, \citenamefont {Wehner}, \citenamefont {Taminiau},\ and\
  \citenamefont {Hanson}}]{Hensen2015}%
  \BibitemOpen
  \bibfield  {author} {\bibinfo {author} {\bibfnamefont {B.}~\bibnamefont
  {Hensen}}, \bibinfo {author} {\bibfnamefont {H.}~\bibnamefont {Bernien}},
  \bibinfo {author} {\bibfnamefont {A.~E.}\ \bibnamefont {Dr{\'e}au}}, \bibinfo
  {author} {\bibfnamefont {A.}~\bibnamefont {Reiserer}}, \bibinfo {author}
  {\bibfnamefont {N.}~\bibnamefont {Kalb}}, \bibinfo {author} {\bibfnamefont
  {M.~S.}\ \bibnamefont {Blok}}, \bibinfo {author} {\bibfnamefont
  {J.}~\bibnamefont {Ruitenberg}}, \bibinfo {author} {\bibfnamefont {R.~F.~L.}\
  \bibnamefont {Vermeulen}}, \bibinfo {author} {\bibfnamefont {R.~N.}\
  \bibnamefont {Schouten}}, \bibinfo {author} {\bibfnamefont {C.}~\bibnamefont
  {Abell{\'a}n}}, \bibinfo {author} {\bibfnamefont {W.}~\bibnamefont {Amaya}},
  \bibinfo {author} {\bibfnamefont {V.}~\bibnamefont {Pruneri}}, \bibinfo
  {author} {\bibfnamefont {M.~W.}\ \bibnamefont {Mitchell}}, \bibinfo {author}
  {\bibfnamefont {M.}~\bibnamefont {Markham}}, \bibinfo {author} {\bibfnamefont
  {D.~J.}\ \bibnamefont {Twitchen}}, \bibinfo {author} {\bibfnamefont
  {D.}~\bibnamefont {Elkouss}}, \bibinfo {author} {\bibfnamefont
  {S.}~\bibnamefont {Wehner}}, \bibinfo {author} {\bibfnamefont {T.~H.}\
  \bibnamefont {Taminiau}},\ and\ \bibinfo {author} {\bibfnamefont
  {R.}~\bibnamefont {Hanson}},\ }\bibfield  {title} {\bibinfo {title}
  {Loophole-free {B}ell inequality violation using electron spins separated by
  1.3 kilometres},\ }\href {https://doi.org/10.1038/nature15759} {\bibfield
  {journal} {\bibinfo  {journal} {Nature}\ }\textbf {\bibinfo {volume} {526}},\
  \bibinfo {pages} {682} (\bibinfo {year} {2015})}\BibitemShut {NoStop}%
\bibitem [{\citenamefont {Nadlinger}\ \emph {et~al.}(2022)\citenamefont
  {Nadlinger}, \citenamefont {Drmota}, \citenamefont {Nichol}, \citenamefont
  {Araneda}, \citenamefont {Main}, \citenamefont {Srinivas}, \citenamefont
  {Lucas}, \citenamefont {Ballance}, \citenamefont {Ivanov}, \citenamefont
  {Tan}, \citenamefont {Sekatski}, \citenamefont {Urbanke}, \citenamefont
  {Renner}, \citenamefont {Sangouard},\ and\ \citenamefont
  {Bancal}}]{Nadlinger_2022}%
  \BibitemOpen
  \bibfield  {author} {\bibinfo {author} {\bibfnamefont {D.~P.}\ \bibnamefont
  {Nadlinger}}, \bibinfo {author} {\bibfnamefont {P.}~\bibnamefont {Drmota}},
  \bibinfo {author} {\bibfnamefont {B.~C.}\ \bibnamefont {Nichol}}, \bibinfo
  {author} {\bibfnamefont {G.}~\bibnamefont {Araneda}}, \bibinfo {author}
  {\bibfnamefont {D.}~\bibnamefont {Main}}, \bibinfo {author} {\bibfnamefont
  {R.}~\bibnamefont {Srinivas}}, \bibinfo {author} {\bibfnamefont {D.~M.}\
  \bibnamefont {Lucas}}, \bibinfo {author} {\bibfnamefont {C.~J.}\ \bibnamefont
  {Ballance}}, \bibinfo {author} {\bibfnamefont {K.}~\bibnamefont {Ivanov}},
  \bibinfo {author} {\bibfnamefont {E.~Y.-Z.}\ \bibnamefont {Tan}}, \bibinfo
  {author} {\bibfnamefont {P.}~\bibnamefont {Sekatski}}, \bibinfo {author}
  {\bibfnamefont {R.~L.}\ \bibnamefont {Urbanke}}, \bibinfo {author}
  {\bibfnamefont {R.}~\bibnamefont {Renner}}, \bibinfo {author} {\bibfnamefont
  {N.}~\bibnamefont {Sangouard}},\ and\ \bibinfo {author} {\bibfnamefont
  {J.-D.}\ \bibnamefont {Bancal}},\ }\bibfield  {title} {\bibinfo {title}
  {Experimental quantum key distribution certified by {B}ell’s theorem},\
  }\href {https://doi.org/10.1038/s41586-022-04941-5} {\bibfield  {journal}
  {\bibinfo  {journal} {Nature}\ }\textbf {\bibinfo {volume} {607}},\ \bibinfo
  {pages} {682–686} (\bibinfo {year} {2022})}\BibitemShut {NoStop}%
\bibitem [{\citenamefont {Bhavsar}\ \emph {et~al.}(2023)\citenamefont
  {Bhavsar}, \citenamefont {Ragy},\ and\ \citenamefont
  {Colbeck}}]{bhavsar2023improved}%
  \BibitemOpen
  \bibfield  {author} {\bibinfo {author} {\bibfnamefont {R.}~\bibnamefont
  {Bhavsar}}, \bibinfo {author} {\bibfnamefont {S.}~\bibnamefont {Ragy}},\ and\
  \bibinfo {author} {\bibfnamefont {R.}~\bibnamefont {Colbeck}},\ }\bibfield
  {title} {\bibinfo {title} {Improved device-independent randomness expansion
  rates using two sided randomness},\ }\href
  {https://doi.org/10.1088/1367-2630/acf393} {\bibfield  {journal} {\bibinfo
  {journal} {New Jounal of Physics}\ }\textbf {\bibinfo {volume} {25}},\
  \bibinfo {pages} {093035} (\bibinfo {year} {2023})}\BibitemShut {NoStop}%
\bibitem [{\citenamefont {Paulsen}(2003)}]{paulsen_2003}%
  \BibitemOpen
  \bibfield  {author} {\bibinfo {author} {\bibfnamefont {V.}~\bibnamefont
  {Paulsen}},\ }\href {https://doi.org/10.1017/CBO9780511546631} {\emph
  {\bibinfo {title} {Completely Bounded Maps and Operator Algebras}}},\
  Cambridge Studies in Advanced Mathematics\ (\bibinfo  {publisher} {Cambridge
  University Press},\ \bibinfo {year} {2003})\BibitemShut {NoStop}%
\bibitem [{\citenamefont {Bamps}\ and\ \citenamefont
  {Pironio}(2015)}]{BampsPironio}%
  \BibitemOpen
  \bibfield  {author} {\bibinfo {author} {\bibfnamefont {C.}~\bibnamefont
  {Bamps}}\ and\ \bibinfo {author} {\bibfnamefont {S.}~\bibnamefont
  {Pironio}},\ }\bibfield  {title} {\bibinfo {title} {Sum-of-squares
  decompositions for a family of {C}lauser-{H}orne-{S}himony-{H}olt-like
  inequalities and their application to self-testing},\ }\href
  {https://doi.org/10.1103/PhysRevA.91.052111} {\bibfield  {journal} {\bibinfo
  {journal} {Physical Review A}\ }\textbf {\bibinfo {volume} {91}},\ \bibinfo
  {pages} {052111} (\bibinfo {year} {2015})}\BibitemShut {NoStop}%
\bibitem [{\citenamefont {Cui}\ \emph {et~al.}(2020)\citenamefont {Cui},
  \citenamefont {Mehta}, \citenamefont {Mousavi},\ and\ \citenamefont
  {Nezhadi}}]{Cui_2020}%
  \BibitemOpen
  \bibfield  {author} {\bibinfo {author} {\bibfnamefont {D.}~\bibnamefont
  {Cui}}, \bibinfo {author} {\bibfnamefont {A.}~\bibnamefont {Mehta}}, \bibinfo
  {author} {\bibfnamefont {H.}~\bibnamefont {Mousavi}},\ and\ \bibinfo {author}
  {\bibfnamefont {S.~S.}\ \bibnamefont {Nezhadi}},\ }\bibfield  {title}
  {\bibinfo {title} {A generalization of {CHSH} and the algebraic structure of
  optimal strategies},\ }\href {https://doi.org/10.22331/q-2020-10-21-346}
  {\bibfield  {journal} {\bibinfo  {journal} {{Quantum}}\ }\textbf {\bibinfo
  {volume} {4}},\ \bibinfo {pages} {346} (\bibinfo {year} {2020})}\BibitemShut
  {NoStop}%
\bibitem [{\citenamefont {Franz}\ \emph {et~al.}(2011)\citenamefont {Franz},
  \citenamefont {Furrer},\ and\ \citenamefont {Werner}}]{Franz_2011}%
  \BibitemOpen
  \bibfield  {author} {\bibinfo {author} {\bibfnamefont {T.}~\bibnamefont
  {Franz}}, \bibinfo {author} {\bibfnamefont {F.}~\bibnamefont {Furrer}},\ and\
  \bibinfo {author} {\bibfnamefont {R.~F.}\ \bibnamefont {Werner}},\ }\bibfield
   {title} {\bibinfo {title} {Extremal quantum correlations and cryptographic
  security},\ }\href {https://doi.org/10.1103/physrevlett.106.250502}
  {\bibfield  {journal} {\bibinfo  {journal} {Physical Review Letters}\
  }\textbf {\bibinfo {volume} {106}},\ \bibinfo {pages} {250502} (\bibinfo
  {year} {2011})}\BibitemShut {NoStop}%
\bibitem [{\citenamefont {Kreyszig}(1991)}]{kreyszig1991introductory}%
  \BibitemOpen
  \bibfield  {author} {\bibinfo {author} {\bibfnamefont {E.}~\bibnamefont
  {Kreyszig}},\ }\href@noop {} {\emph {\bibinfo {title} {Introductory
  functional analysis with applications}}},\ Vol.~\bibinfo {volume} {17}\
  (\bibinfo  {publisher} {John Wiley \& Sons},\ \bibinfo {year}
  {1991})\BibitemShut {NoStop}%
\bibitem [{\citenamefont {Scholz}\ and\ \citenamefont
  {Werner}(2008)}]{scholz2008}%
  \BibitemOpen
  \bibfield  {author} {\bibinfo {author} {\bibfnamefont {V.~B.}\ \bibnamefont
  {Scholz}}\ and\ \bibinfo {author} {\bibfnamefont {R.~F.}\ \bibnamefont
  {Werner}},\ }\href {https://arxiv.org/abs/0812.4305} {\bibinfo {title}
  {Tsirelson's problem}} (\bibinfo {year} {2008}),\ \Eprint
  {https://arxiv.org/abs/0812.4305} {arXiv:0812.4305 [math-ph]} \BibitemShut
  {NoStop}%
\bibitem [{\citenamefont {Navascués}\ \emph {et~al.}(2012)\citenamefont
  {Navascués}, \citenamefont {Cooney}, \citenamefont {Pérez-García},\ and\
  \citenamefont {Villanueva}}]{Navascu_s_2012}%
  \BibitemOpen
  \bibfield  {author} {\bibinfo {author} {\bibfnamefont {M.}~\bibnamefont
  {Navascués}}, \bibinfo {author} {\bibfnamefont {T.}~\bibnamefont {Cooney}},
  \bibinfo {author} {\bibfnamefont {D.}~\bibnamefont {Pérez-García}},\ and\
  \bibinfo {author} {\bibfnamefont {N.}~\bibnamefont {Villanueva}},\ }\bibfield
   {title} {\bibinfo {title} {A physical approach to {T}sirelson’s problem},\
  }\href {https://doi.org/10.1007/s10701-012-9641-0} {\bibfield  {journal}
  {\bibinfo  {journal} {Foundations of Physics}\ }\textbf {\bibinfo {volume}
  {42}},\ \bibinfo {pages} {985–995} (\bibinfo {year} {2012})}\BibitemShut
  {NoStop}%
\bibitem [{\citenamefont {Brown}\ \emph {et~al.}(2024)\citenamefont {Brown},
  \citenamefont {Fawzi},\ and\ \citenamefont
  {Fawzi}}]{Brown2024deviceindependent}%
  \BibitemOpen
  \bibfield  {author} {\bibinfo {author} {\bibfnamefont {P.}~\bibnamefont
  {Brown}}, \bibinfo {author} {\bibfnamefont {H.}~\bibnamefont {Fawzi}},\ and\
  \bibinfo {author} {\bibfnamefont {O.}~\bibnamefont {Fawzi}},\ }\bibfield
  {title} {\bibinfo {title} {Device-independent lower bounds on the conditional
  von {N}eumann entropy},\ }\href {https://doi.org/10.22331/q-2024-08-27-1445}
  {\bibfield  {journal} {\bibinfo  {journal} {{Quantum}}\ }\textbf {\bibinfo
  {volume} {8}},\ \bibinfo {pages} {1445} (\bibinfo {year} {2024})}\BibitemShut
  {NoStop}%
\bibitem [{\citenamefont {Navascu\'es}\ \emph {et~al.}(2007)\citenamefont
  {Navascu\'es}, \citenamefont {Pironio},\ and\ \citenamefont
  {Ac\'{\i}n}}]{NPA1}%
  \BibitemOpen
  \bibfield  {author} {\bibinfo {author} {\bibfnamefont {M.}~\bibnamefont
  {Navascu\'es}}, \bibinfo {author} {\bibfnamefont {S.}~\bibnamefont
  {Pironio}},\ and\ \bibinfo {author} {\bibfnamefont {A.}~\bibnamefont
  {Ac\'{\i}n}},\ }\bibfield  {title} {\bibinfo {title} {Bounding the set of
  quantum correlations},\ }\href
  {https://doi.org/10.1103/PhysRevLett.98.010401} {\bibfield  {journal}
  {\bibinfo  {journal} {Phys. Rev. Lett.}\ }\textbf {\bibinfo {volume} {98}},\
  \bibinfo {pages} {010401} (\bibinfo {year} {2007})}\BibitemShut {NoStop}%
\bibitem [{\citenamefont {Navascués}\ \emph {et~al.}(2008)\citenamefont
  {Navascués}, \citenamefont {Pironio},\ and\ \citenamefont {Acín}}]{NPA2}%
  \BibitemOpen
  \bibfield  {author} {\bibinfo {author} {\bibfnamefont {M.}~\bibnamefont
  {Navascués}}, \bibinfo {author} {\bibfnamefont {S.}~\bibnamefont
  {Pironio}},\ and\ \bibinfo {author} {\bibfnamefont {A.}~\bibnamefont
  {Acín}},\ }\bibfield  {title} {\bibinfo {title} {A convergent hierarchy of
  semidefinite programs characterizing the set of quantum correlations},\
  }\href {https://doi.org/10.1088/1367-2630/10/7/073013} {\bibfield  {journal}
  {\bibinfo  {journal} {New Journal of Physics}\ }\textbf {\bibinfo {volume}
  {10}},\ \bibinfo {pages} {073013} (\bibinfo {year} {2008})}\BibitemShut
  {NoStop}%
\bibitem [{\citenamefont {Wooltorton}\ \emph {et~al.}(2025)\citenamefont
  {Wooltorton}, \citenamefont {Brown},\ and\ \citenamefont
  {Colbeck}}]{dataset}%
  \BibitemOpen
  \bibfield  {author} {\bibinfo {author} {\bibfnamefont {L.}~\bibnamefont
  {Wooltorton}}, \bibinfo {author} {\bibfnamefont {P.}~\bibnamefont {Brown}},\
  and\ \bibinfo {author} {\bibfnamefont {R.}~\bibnamefont {Colbeck}},\ }\href
  {https://doi.org/10.5281/zenodo.17235290} {\bibinfo {title} {Dataset for
  {F}ig.\ 2}} (\bibinfo {year}
  {2025})\BibitemShut {NoStop}%
\bibitem [{\citenamefont {Le}\ \emph {et~al.}(2023)\citenamefont {Le},
  \citenamefont {Meroni}, \citenamefont {Sturmfels}, \citenamefont {Werner},\
  and\ \citenamefont {Ziegler}}]{Le23}%
  \BibitemOpen
  \bibfield  {author} {\bibinfo {author} {\bibfnamefont {T.~P.}\ \bibnamefont
  {Le}}, \bibinfo {author} {\bibfnamefont {C.}~\bibnamefont {Meroni}}, \bibinfo
  {author} {\bibfnamefont {B.}~\bibnamefont {Sturmfels}}, \bibinfo {author}
  {\bibfnamefont {R.~F.}\ \bibnamefont {Werner}},\ and\ \bibinfo {author}
  {\bibfnamefont {T.}~\bibnamefont {Ziegler}},\ }\bibfield  {title} {\bibinfo
  {title} {Quantum {C}orrelations in the {M}inimal {S}cenario},\ }\href
  {https://doi.org/10.22331/q-2023-03-16-947} {\bibfield  {journal} {\bibinfo
  {journal} {{Quantum}}\ }\textbf {\bibinfo {volume} {7}},\ \bibinfo {pages}
  {947} (\bibinfo {year} {2023})}\BibitemShut {NoStop}%
\bibitem [{\citenamefont {Barizien}\ \emph {et~al.}(2024)\citenamefont
  {Barizien}, \citenamefont {Sekatski},\ and\ \citenamefont
  {Bancal}}]{Barizien24}%
  \BibitemOpen
  \bibfield  {author} {\bibinfo {author} {\bibfnamefont {V.}~\bibnamefont
  {Barizien}}, \bibinfo {author} {\bibfnamefont {P.}~\bibnamefont {Sekatski}},\
  and\ \bibinfo {author} {\bibfnamefont {J.-D.}\ \bibnamefont {Bancal}},\
  }\bibfield  {title} {\bibinfo {title} {Custom {B}ell inequalities from formal
  sums of squares},\ }\href {https://doi.org/10.22331/q-2024-05-02-1333}
  {\bibfield  {journal} {\bibinfo  {journal} {{Quantum}}\ }\textbf {\bibinfo
  {volume} {8}},\ \bibinfo {pages} {1333} (\bibinfo {year} {2024})}\BibitemShut
  {NoStop}%
\bibitem [{\citenamefont {Wooltorton}\ \emph {et~al.}(2024)\citenamefont
  {Wooltorton}, \citenamefont {Brown},\ and\ \citenamefont {Colbeck}}]{WBC3}%
  \BibitemOpen
  \bibfield  {author} {\bibinfo {author} {\bibfnamefont {L.}~\bibnamefont
  {Wooltorton}}, \bibinfo {author} {\bibfnamefont {P.}~\bibnamefont {Brown}},\
  and\ \bibinfo {author} {\bibfnamefont {R.}~\bibnamefont {Colbeck}},\
  }\bibfield  {title} {\bibinfo {title} {Device-independent quantum key
  distribution with arbitrarily small nonlocality},\ }\href
  {https://doi.org/10.1103/PhysRevLett.132.210802} {\bibfield  {journal}
  {\bibinfo  {journal} {Phys. Rev. Lett.}\ }\textbf {\bibinfo {volume} {132}},\
  \bibinfo {pages} {210802} (\bibinfo {year} {2024})}\BibitemShut {NoStop}%
\bibitem [{\citenamefont {Farkas}(2024)}]{Farkas23}%
  \BibitemOpen
  \bibfield  {author} {\bibinfo {author} {\bibfnamefont {M.}~\bibnamefont
  {Farkas}},\ }\bibfield  {title} {\bibinfo {title} {Unbounded
  device-independent quantum key rates from arbitrarily small nonlocality},\
  }\href {https://doi.org/10.1103/PhysRevLett.132.210803} {\bibfield  {journal}
  {\bibinfo  {journal} {Phys. Rev. Lett.}\ }\textbf {\bibinfo {volume} {132}},\
  \bibinfo {pages} {210803} (\bibinfo {year} {2024})}\BibitemShut {NoStop}%
\bibitem [{\citenamefont {Pereira~Alves}\ and\ \citenamefont
  {Kaniewski}(2022)}]{PereiraAlves22}%
  \BibitemOpen
  \bibfield  {author} {\bibinfo {author} {\bibfnamefont {G.}~\bibnamefont
  {Pereira~Alves}}\ and\ \bibinfo {author} {\bibfnamefont {J.}~\bibnamefont
  {Kaniewski}},\ }\bibfield  {title} {\bibinfo {title} {Optimality of any pair
  of incompatible rank-one projective measurements for some nontrivial {B}ell
  inequality},\ }\href {https://doi.org/10.1103/PhysRevA.106.032219} {\bibfield
   {journal} {\bibinfo  {journal} {Phys. Rev. A}\ }\textbf {\bibinfo {volume}
  {106}},\ \bibinfo {pages} {032219} (\bibinfo {year} {2022})}\BibitemShut
  {NoStop}%
\end{thebibliography}
\end{document}